\newtheorem{lemma}{Lemma}
\newtheorem{theorem}{Theorem}
\newtheorem{problem}{Problem}
\newtheorem{remark}{Remark}
\title{\LARGE \bf
Self-Triggered Markov Decision Processes
}
\author{Yunhan Huang$^{1}$ and Quanyan Zhu$^{1}$
\thanks{$^{1}$ Y. Huang and Q. Zhu are with the Department of Electrical and Computer Engineering,
        New York University, 370 Jay St., Brooklyn, NY.
        {\tt\small \{yh.huang, qz494\}@nyu.edu}}
}
\begin{document}

\maketitle
\thispagestyle{plain}
\pagestyle{plain}

\begin{abstract}
In this paper, we study Markov Decision Processes (MDPs) with self-triggered strategies, where the idea of self-triggered control is extended to more generic MDP models. This extension broadens the application of self-triggering policies to a broader range of systems. We study the co-design problems of the control policy and the triggering policy to optimize two pre-specified cost criteria. The first cost criterion is introduced by incorporating a pre-specified update penalty into the traditional MDP cost criteria to reduce the use of communication resources. Under this criteria, a novel dynamic programming (DP) equation called DP equation with optimized lookahead to proposed to solve for the self-triggering policy under this criteria. The second self-triggering policy is to maximize the triggering time while still guaranteeing a pre-specified level of sub-optimality. Theoretical underpinnings are established for the computation and implementation of both policies. Through a gridworld numerical example, we illustrate the two policies' effectiveness in reducing sources consumption and demonstrate the trade-offs between resource consumption and system performance.
\end{abstract}

\section{Introduction}
Recent advances in information and communication technologies have led to the implementation of large-scale resource-constrained networked control systems. In these systems, it is desirable to limit the sensor and control communication and computation to instances when a system needs attention \cite{heemels2012introduction}. As a result, the self-triggered control paradigm is proposed to reduce the utilization of communication resources and/or actuation movements while still maintaining desirable closed-loop behavior for these systems \cite{gommans2014self}. The self-triggered control abandons the conventional periodic time-triggered implementations. In self-triggered control, the self-triggering policy consists of two sub-policies: the control policy and a triggering mechanism that pre-determines, at an update time, when the control inputs have to be updated the next time. Due to its efficiency in resource-saving, self-triggered control has been studied extensively in the last decades \cite{anta2010sample,wang2009self,akashi2018self,heemels2012introduction,gao2019robust,gommans2014self}.

The study of self-triggered has been confined to state-space dynamical models, including either linear models \cite{heemels2012introduction,wang2009self,akashi2018self,gommans2014self} or nonlinear models \cite{anta2010sample,gao2019robust} in both(either) continuous-time and(or) discrete-time settings. However, recent developments in technologies such as wireless communication, machine learning, and real-time analytics have broadened the application of Internet of Things (IoTs) beyond control systems to a wide range of areas, including logistics and supply chain \cite{yuvaraj2016smart,feinberg2016optimality,wikecek2016mean}, smart cities\cite{zanella2014internet}, and wearables\cite{lu2019wearable}. These systems are usually large-scale, equipped with resource-constrained devices, and difficult to be described by state-space dynamic models. Hence, there is an urgent need to incorporate the idea of self-triggering policy in control into a more general dynamic model: Markov Decision Processes (MDP). This incorporation can lead toward a computationally and communicationally more efficient IoT-enabled system.

This paper studies a discrete-time self-triggered MDP where the control\footnote{In this paper, we use control and action interchangeably.} policy and the triggering mechanism/policy are co-designed to achieve certain cost criteria. The differences between this work and most existing papers in self-triggered control are three-fold. The first is that we study self-triggered policies for a more generic dynamic model, i.e., an MDP model, which allows the extension of the self-triggering policy to a wider range of applications. Second, we address the co-design problem of jointly designing the control policy and the triggering policy. Existing self-triggering methods design the control policy and the triggering policy in an ordered manner, i.e., the control policy is designed first. The triggering policy is then designed subsequently while ensuring certain control performance \cite{heemels2012introduction,wang2009self}. For example, in \cite{wang2009self}, the control gain is pre-set to be the $H_\infty$ control gain, based on which a triggering policy is designed to assure a specified level of $\mathcal{L}_2$ stability. Since the control policy is given without considering the self-triggering nature of the whole policy, it is hard to guarantee that the given control policy is optimal for achieving the minimum number of updates while maintaining certain cost criteria \cite{gommans2014self}. Here, we address a co-design problem to alleviate the concern regarding the optimality issue. Third, in existing works \cite{heemels2012introduction}, the analysis of control performance under the self-triggered control paradigm is mostly qualitative, e.g.,  the analysis of whether a certain type of stability can be achieved.  Control performance is sometimes quantified as the decay rate for the Lyapunov function. Only few self-triggering methods provide quantitative analysis for control performance such as $\mathcal{L}_2$ gains \cite{wang2009self}, quadratic costs\cite{molin2013optimality,maity2019optimal,huang2020infinite}. More recently, T. Gommans et al. studies self-triggered linear-quadratic-gaussian (LQG) control associated with quadratic costs. In this work, we consider a generic class of cost criteria and propose self-triggered policies that can guarantee a certain optimality level.The contributions of this paper are summarized as follows.
\begin{enumerate}
    \item We study self-triggered MDP, which extends the idea of self-triggered control into a more generic dynamical model. The genericness of the MDP model enables the application of self-triggering policies into a broader range of systems.
    \item We jointly design the control policy and the triggering policy that co-optimizes pre-specified cost criteria.
    \item We propose two frameworks that produce two co-designed self-triggering policies. The first is introduced by incorporating an update penalty into the traditional MDP cost criteria to reduce the use of communication resources. The second is a greedy reduction of resources used while still guaranteeing any pre-given level of sub-optimality. Theoretical underpinnings are established for the computation and implementation of both policies.
    \item Through a gridworld example in both non-windy and windy settings, we show that the proposed policies are efficient in reducing communication resources consumed while still maintaining a high level of performance.
\end{enumerate}


\subsection{Nomenclature}

In this paper, $\mathbb{R}$ and $\mathbb{N}$ represent the set of real numbers and natural numbers, respectively. The expectation operator is denoted by $\mathbb{E}$. And $\Delta t \in \mathbb{N}$ denotes the time steps between two neighboring updates. The letter $l$ is the index for the $l$th update and $t_l$ is the time instance when the $l$th update happens. The notation $\mathbb{N}_{[t_l,t_{l+1}]}$ means the intersection of the two sets$\mathbb{N}$ and $[t_l,t_{l+1}]$. The set of non-negative real numbers is denoted by $\mathbb{R}^+$. The notation $\mathcal{A}\backslash \mathcal{B}$ denotes the set $\{x\ |\ x\in\mathcal{A},x\notin\mathcal{B}\}$.

\section{SELF-TRIGGERED MARKOV DECISION PROCESS}\label{sec:ProblemForm}

In this section, we provide the problem formulation for the self-triggered action strategy. We consider a discrete-time MDP defined by a tuple $\{\mathcal{X},\mathcal{A},P,c\}$, where $\mathcal{X}$ is the state space, $\mathcal{A}$ is the actions space, $P$ is the time-homogeneous transition probability, and $c$ is the state-wise cost function. The state space $\mathcal{X}$ and action space $\mathcal{A}$ are both assumed to be Borel subsets of Polish (Banach and separable) spaces. If an action $a\in \mathcal{A}$ is selected at a state $x\in\mathcal{X}$, then a cost $c(x,a)$ is incurred, where without loss of generality, we suppose $c:\mathcal{X}\times \mathcal{A}\rightarrow\mathbb{R}^+$. The function $c$ is assumed to be bounded and Borel measurable. The transition probability $P(B|x,a)$ is a Borel function on $\mathcal{X}\times \mathcal{A}$ for each Borel subset $B$ of $\mathcal{X}$, and $P(\cdot|x,a)$ is a probability measure on the Borel $\sigma$-field of $\mathcal{X}$ for each $(x,a)\in\mathcal{X}\times \mathcal{A}$.

In classic MDP, the decision process proceeds as follows: at time $t=0,1,\cdots$, the current state of the system, $x_t$, is observed. A decision-maker decides which action, $a_t$, to choose, the cost $c_t = c(x_t,a_t)$ is incurred, the system moves to the next state following the rule $x_{t+1} \sim P(\cdot|x_t,a)$, and the process continues. The rule that the decision-maker follows to choose an action is called policy. We consider stationary Markov policy $\phi$ in which all decisions depend only on the current state. A stationary Markov policy $\phi$ is defined by a measurable mapping $\phi:\mathcal{X}\times \mathcal{A}$. In classic MDP, the goal is to find an optimal stationary Markov policy that minimizes 
\begin{equation}\label{ClassicCostFunctional}
v^\phi(x) = \mathbb{E}^{\phi}\left[\sum_{t=0}^\infty \beta^t c(x_t,a_t)\middle \vert x_0=x \right],
\end{equation}
where $\beta$ is a discount factor strictly less than $1$, and the expectation is based on the probability distribution on the set of all trajectory $(\mathcal{X}\times \mathcal{A})^\infty$, which is uniquely determined by the policy $\phi$ and the initial state $x$ (\cite{bertsekas1996stochastic}, pp. 140-141). Define the optimal cost
$$
V(x)\coloneqq \inf_{\phi \in \Phi} v^\phi(x),
$$
where $\Phi$ is the set of all stationary policies. A policy $\phi$ is called optimal if $v^\phi(x)=V(x)$ for all $x\in\mathcal{X}$.

\subsection{Self-Triggered Decision Making}
In classic MDP, decision making requires persistent transmission of measured state and updates of actions at each time instance $t\in\mathbb{N}$. In this paper, we are interested in constructing a policy that requires less sensing demand, lower communication rate, and less actuator movements  \cite{gallieri2012}, while still maintaining certain forms of optimality. 

The self-triggering policy is based on holding the current input value for a controlled duration while still guaranteeing certain forms of optimality. The self-triggered policy carries the following structure
\begin{equation}\label{Eq:StructureSTP}
\begin{cases}
t_{l+1} &= t_l + \tau(x_{t_l}),\\
a_t &= \pi(x_{{t_l}})\in\mathcal{A},\ t\in \mathbb{N}_{[t_l.t_{l+1})},
\end{cases}
\end{equation}
where $l$ is the index for the number of triggers,  $t_0\coloneqq 0$, $\tau:\mathcal{X}\rightarrow  \mathcal{T}, \mathcal{T}\coloneqq \{1,2,\cdots,\bar{T}\}$, $\bar{T}\in\mathbb{N}$, and $\pi:\mathcal{X}\rightarrow \mathcal{A}$. Here, the integer $\bar{T}$ is an arbitrary upper bound on the waiting time for next update. The self-triggering policy, denoted by $\mu$, involves two sub-policies: the timing policy, $\tau(x)$, that determines the next time for updating, and the control policy, $\pi(x)$, that chooses a fixed action to deploy for the next $\tau(x)$ time instances. For convenience, we write $\mu = (\tau, \pi)$ and $\mu:\mathcal{X}\rightarrow \mathcal{T}\times \mathcal{A}$.

\subsection{Performance Criteria}
This paper introduces two different yet related problems associated with two cost criteria; one is constructed by incorporating a penalty $O\geq 0$ for updating the action into the classic cost criteria defined in \cref{ClassicCostFunctional}. The idea of introducing a penalty is originated from costly measurements that have been investigated in the context of LQG optimal control \cite{huang2020infinite,cooper1971optimal} and games\cite{huang2020cross,huang2021pursuit}. The penalty $O\geq 0$ is a scalar, which we refer to as the update penalty. For instance, if $t_l$ and $t_{l+1}$ are two neighboring updating time, during the time interval $[t_{l},t_{l+1}]$, the total update penalty is $\beta^{t_l}O + \beta^{t_{l+1}}O$. Now, we formulate the first problem.
\begin{problem}\label{ProblemOfPenalty}
Find an optimal self-triggering policy $\mu$ that minimizes the following cost criterion over an infinite horizon
\begin{equation}\label{Eq:CostFunctionalPenalty}
f^\mu(x) = \mathbb{E}^{\mu}\left[ \sum_{t=0}^\infty \beta^t c(x_t,a_t) +\sum_{l=1}^{\infty}  \beta^{t_l} O \middle \vert x_0 =x \right],
\end{equation}
where the first term is the accumulated costs in the classic MDP, and the second term is the accumulated costs of updating one's action.
\end{problem}
The other cost criteria is similar to that of \cite{gommans2014self}. That is for a pre-specified sub-optimal performance, we aim to reduce the number of times the input/output is updated, while maintaining the pre-specified sub-optimal performance. Now, we formulate our second problem as
\begin{problem}\label{ProblemofHardConstraint}
Find a policy $\mu$ that maximizes the next transmission time $\tau(x)$ subject to the performance guarantee that 
\begin{equation}\label{Eq:Sub-optimalityLevel}
v^\mu(x) \leq \alpha V(x),\ \textrm{for all }x\in\mathcal{X},
\end{equation}
where $\alpha\geq 1$ is a scalar. 
\end{problem}
\begin{remark}
In Problem \ref{ProblemOfPenalty}, we introduce an update penalty $O$ to capture the trade-off between the degree of optimality and the usage of sensing/communication resources. The update penalty can be interpreted as a soft constraint on the number of updates. In Problem \ref{ProblemofHardConstraint}, $\alpha$ serves as a scaling factor that can be selected arbitrarily to balance the consumption of sensing/communication resources and the degradation of performance. There is a hard constraint that requires matainting a certain degree of sub-optimality. When $\alpha=1$, no degradation of performance is allowed. Solving both problems involves the co-design of the waiting time for next update (through $\tau$) and the chosen action (through $\pi$).
\end{remark}

\section{THEORETICAL FRAMEWORKS}\label{Sec:TheoreticalResults}

In this section, by establishing theoretical underpinnings, we pave the way for finding the self-triggering policies that solve the problems. For \Cref{ProblemOfPenalty}, we formulate a dynamic programming (DP) equation, which we call a DP equation with optimized lookahead. With this equation, we can resort to several effective methods such as value iterations and policy iterations to characterize an optimal self-triggering policy. For \Cref{ProblemofHardConstraint}, we propose a greedy self-triggering policy that aims to reduce the number of updates and show that the proposed policy is well-defined and satisfies the performance guarantee for any pre-specified $\alpha$.

\subsection{Dynamic Programming Equation with Optimized Lookahead}

To solve \Cref{ProblemOfPenalty}, the DP equation with optimized lookahead is derived and presented in this sub-section. The derivation idea is to form consolidated costs, states, and actions between two update time instances, which generates a new discrete-time MDP in the classic setting.

Let $\bar{c}_l$ represent the consolidated costs that correspond to the time period between $l$-th update and $(l+1)$-th update, i.e., the time period $[t_{l},t_{l+1})$. From \cref{Eq:StructureSTP} and \cref{Eq:CostFunctionalPenalty}, we can obtain
$$
\begin{aligned}
\bar{c}_l &\coloneqq \bar{c}(x_{t_l}, a_{t_l},\Delta t_l)
= \mathbb{E}\left[ \sum_{t=0}^{\Delta t_l-1} \beta^t c(x_{t_l + t},a_{t_l}) \middle \vert x_{t_l},a_{t_l},\Delta t_l  \right],
\end{aligned}
$$
where given a self-triggering policy $\mu=(\pi,\tau)$, the fixed action $a_{t_l}$ is produced by $\pi(x_{t_l})$ and the waiting time $\Delta t_l$ is generated by $\tau(x_{t_l})$. An application of the Fubini's theorem (principle)  and Markov property \cite{durrett2019probability} yields
$$
\bar{c}(x,a,\Delta t) = \sum_{t=0}^{\Delta{t}-1} \beta^t \mathbb{E}\left[ c(x_t,a_t) \middle \vert x_0 =x, a_t =a,\forall t<\Delta t \right].
$$
Furthermore, we define 
\begin{equation}\label{Eq:Skip-TransitionProb}
\bar{P}(B|x,a,\Delta T)\coloneqq \textit{Prob}\left(x_{\Delta t} \in B\middle\vert x_0=x, a_t= a,\forall t<\Delta t \right),
\end{equation}
as the skip-probability that the MDP is in Borel subset $B$ of $\mathcal{X}$, after time $\Delta t$, given that the initial condition is $x_0 =x$ and that the action is fixed until $\Delta t$. The skip-probability $\bar{P}(B|x,a,\Delta T)$ is a Borel function on $\mathcal{X}\times \mathcal{A}\times\mathcal{T}$ for each Borel subset $B$ of $\mathcal{X}$, which is determined by the one-step transition probability $P(\cdot|x,a)$ defined in \Cref{sec:ProblemForm}.

With the definition of the consolidated stage-wise function $\bar{c}$ and the tower property of conditional expectation, the infinite-horizon cost functional in \cref{Eq:CostFunctionalPenalty} can be re-written as 
\begin{equation}\label{Eq:CostFunPenaltyProblem}
f^{\mu}(x) = \mathbb{E}\left[ \sum_{l=0}^\infty \beta^{t_l}\left( \bar{c}(x_{t_l},\mu(x_{t_l})) + \beta^{\tau(x_{t_l})} O )\right)  \middle\vert x_0=x \right].
\end{equation}Define the optimal cost for \Cref{ProblemOfPenalty} as
\begin{equation}\label{Eq:ValueFunPenaltyProblem}
    V_{st} (x)\coloneqq \inf_{\mu\in \Phi_{st}} f^\mu(x),
\end{equation}
where $\Phi_{st}$ is the set of all policies taking the structure of . In the following theorem, we state the DP equation for $V_{st}(\cdot)$.
\begin{theorem}\label{Theo:DPEquation}
The value function defined by \cref{Eq:ValueFunPenaltyProblem} satisfies the following dynamic programming equation:
\begin{equation}\label{Eq:DPEOptimizedLookahead}
\begin{aligned}
V_{st}(x)  = \inf_{a\in\mathcal{A},\Delta t  \in \mathcal{T}} \mathbb{E}\Bigg[&\sum_{t=0}^{\Delta t -1} \beta^t c(x_t,a) +  \beta^{\Delta t} \left(V_{st}(x_{\Delta t}) + O\right)\\
&\Bigg \vert x_0 =x, a_t= a,\forall t<\Delta t\Bigg],
\end{aligned}
\end{equation}
for all $x\in \mathcal{X}$.
If there exists a policy $\mu^* = (\tau^*,\pi^*)$ such that 
$$
\begin{aligned}
V_{st}(x)  = \mathbb{E}\Bigg[&\sum_{t=0}^{\tau^*(x) -1} \beta^t c(x_t,\pi^*(x)) +  \beta^{\tau^*(x)} \left(V_{st}(x_{\tau^*(x)}) + O\right)\\ 
&\Bigg \vert x_0 =x, a_t= \pi^*(x),\forall t<\Delta t\Bigg],\\
\end{aligned}
$$
for all $x\in \mathcal{X}$, then $\mu^*$ is an optimal policy for \Cref{ProblemOfPenalty}. 
\end{theorem}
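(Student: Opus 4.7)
The plan is to recognize that the consolidated-cost rewriting in \cref{Eq:CostFunPenaltyProblem} recasts \Cref{ProblemOfPenalty} as a classical infinite-horizon MDP over an enlarged action set $\mathcal{A}\times\mathcal{T}$, whose macro-stages are the inter-update intervals. In this reformulation, the one-step stage cost is $\bar{c}(x,a,\Delta t)+\beta^{\Delta t} O$, the one-step transition kernel is the skip-probability $\bar{P}(\cdot|x,a,\Delta t)$ defined in \cref{Eq:Skip-TransitionProb}, and the effective discount factor is $\beta^{\Delta t}$, which lies in $[\beta^{\bar{T}},\beta]\subset(0,1)$ because $\Delta t\in\mathcal{T}=\{1,\ldots,\bar{T}\}$. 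Since $c$ is bounded and $\beta<1$, both $f^{\mu}$ and $V_{st}$ are bounded on $\mathcal{X}$, so the whole problem fits into the framework of discounted-cost dynamic programming.

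Given this reformulation, I would derive \cref{Eq:DPEOptimizedLookahead} by applying the Bellman principle of optimality to the macro-stage MDP. First, I split the cost in \cref{Eq:CostFunPenaltyProblem} into the contribution from the first inter-update interval and the tail from time $t_1$ onwards. By the Markov property and time-homogeneity of $P$, conditional on $x_{t_1}=y$ the tail equals $\beta^{\Delta t_0}f^{\mu}(y)$ for the shifted policy. Taking the infimum over the first choice $(a,\Delta t)\in\mathcal{A}\times\mathcal{T}$ and using the tower property then yields $V_{st}(x)=\inf_{a,\Delta t}\mathbb{E}[\bar{c}(x,a,\Delta t)+\beta^{\Delta t}(V_{st}(x_{\Delta t})+O)\,|\,x_0=x,\, a_t=a,\,\forall t<\Delta t]$, which after expanding $\bar{c}$ gives exactly \cref{Eq:DPEOptimizedLookahead}.

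For the sufficiency statement, I would use a standard verification argument. Assuming $\mu^{*}=(\tau^{*},\pi^{*})$ attains the right-hand side of \cref{Eq:DPEOptimizedLookahead} pointwise, I would iterate the fixed-point identity $N$ macro-stages forward to obtain $V_{st}(x)=\mathbb{E}^{\mu^{*}}[\sum_{l=0}^{N-1}\beta^{t_l}(\bar{c}(x_{t_l},\mu^{*}(x_{t_l}))+\beta^{\tau^{*}(x_{t_l})}O)+\beta^{t_N}V_{st}(x_{t_N})\mid x_0=x]$. Since $V_{st}$ is bounded and $t_N\geq N$, the residual term $\beta^{t_N}V_{st}(x_{t_N})$ vanishes as $N\to\infty$ by dominated convergence, yielding $V_{st}(x)=f^{\mu^{*}}(x)$ and hence optimality of $\mu^{*}$.

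The main obstacle I expect is the measurable-selection and well-posedness aspect rather than the algebraic manipulations: because $\mathcal{X}$ and $\mathcal{A}$ are only Borel subsets of Polish spaces, one must justify that the mapping $(x,a,\Delta t)\mapsto\mathbb{E}[\sum_{t=0}^{\Delta t-1}\beta^{t}c(x_t,a)+\beta^{\Delta t}(V_{st}(x_{\Delta t})+O)\mid x_0=x,a_t=a]$ is Borel measurable and that the Bellman operator is a contraction on the space of bounded Borel functions on $\mathcal{X}$. Boundedness and Borel-measurability of $c$, together with Borel-measurability of $\bar{P}$ in $(x,a,\Delta t)$, should let me invoke Fubini and standard results from \cite{bertsekas1996stochastic} to conclude both measurability and that the Bellman operator is a $\beta$-contraction, ensuring a unique bounded Borel solution $V_{st}$ and legitimizing the infimum in \cref{Eq:DPEOptimizedLookahead}.
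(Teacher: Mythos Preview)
Your proposal is correct and rests on the same underlying idea as the paper---view each inter-update interval as one macro-stage of a classical discounted MDP---but the technical packaging differs in a noteworthy way. The paper keeps a \emph{fixed} discount factor $\beta$ per macro-stage and, to make the arithmetic match \cref{Eq:CostFunPenaltyProblem}, augments the state to $X_l=(x_{t_l},\tilde t_l)$ with $\tilde t_l=t_l-l$, absorbing the variable time-elapsed factor into the stage cost $C(X_l,A_l)=\beta^{\tilde t_l}[\bar c(x_{t_l},a_{t_l},\Delta t)+\beta^{\Delta t}O]$; this lets it invoke Theorems~6.2.5 and~6.2.12 of \cite{puterman2014markov} off the shelf. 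You instead keep the state space $\mathcal{X}$ unchanged and let the discount be the action-dependent quantity $\beta^{\Delta t}\in[\beta^{\bar T},\beta]$, then carry out the Bellman split and the verification argument by hand. Your route is more self-contained and avoids enlarging the state space (and the attendant need to argue that $\mathcal{X}\times\{0,1,2,\ldots\}$ remains Polish), at the price of not being able to simply cite a textbook theorem because the discount factor is no longer constant; the paper's route trades a slightly artificial state augmentation for the convenience of a direct citation. Both lead to \cref{Eq:DPEOptimizedLookahead} and to the sufficiency of $\mu^*$, and your observation that the Bellman operator is still a $\beta$-contraction (since $\beta^{\Delta t}\le\beta$) is exactly what replaces the cited theorems.
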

\begin{proof}
See Appendix \ref{Proof:DPEquation}.
\end{proof}

\begin{remark}
The DP equation in \cref{Eq:DPEOptimizedLookahead} includes the consolidated state costs, $\sum_{t=0}^{\Delta t-1} \beta^t c(x_t,a)$, which is the accumulated costs incurred from the current update time instance to the next update time instance, the cost-to-go after $\Delta t$-steps of lookahead, $\beta^{\Delta t} V(x_{\Delta x})$, and the penalty for a new update $\beta^{\Delta t} O$. Based on the current measurement $x$, the DP equation has $\Delta t$-steps of lookahead. The number of steps $\Delta t$ is optimized in order to balance the trade-off between the system performance and the update penalty. Thus, we refer to \cref{Eq:DPEOptimizedLookahead} as the DP equation with optimized lookahead. The optimized number of lookahead steps is the optimal waiting time for the next triggering given the penalty of triggering $O$. When $O=0$, the DP equations gives $V_{st}(x) = V(x),\forall x\in\mathcal{X}$, i.e., the value function is the same as the one in classic MDPs.
\end{remark}

\begin{remark}[Computational Methods]
One can resort to methods such as the usual value iteration or the policy iteration \cite{puterman2014markov} to solve the DP equation. In the value iteration approach, given the $k$-th estimate of the value function, $V_{st,k}(\cdot)$, the next estimate $V_{st,k+1}$ can be computed using \cref{Eq:DPEOptimizedLookahead}. Repeat this process until it converges to the fixed-point of \cref{Eq:DPEOptimizedLookahead}. The convergence is guaranteed for any given $V_{st,0}$, when $\beta <1$, in view of the Banach fixed-point theorem (see Theorem 6.2.3. of \cite{puterman2014markov}). And the convergence rate is guaranteed to be
$\Vert V_{st,k} - V_{st} \Vert \leq ({\beta^k}/({1-\beta}))\Vert V_{st,0} - V_{st,1} \Vert$. The actual convergence speed should be faster than the above rate depending on what the update penalty $O$ is. 

\end{remark}

With \Cref{Theo:DPEquation}, we can compute the value function $V_{st}(\cdot)$ and the optimal self-triggering policy $\mu^*$. The computation of $V_{st}(\cdot)$ and $\mu^*$ is usually off-line, and then $\mu^*$ is deployed for online implementation. In the next sub-section, we propose a greedy policy that solves \Cref{ProblemofHardConstraint}, i.e., a policy that reduces the number of updates while maintaining a certain level of sub-optimality.

\subsection{Performance Guaranteed Self-Triggering Policies}\label{Subsec:PerfGuaranteed}
In this sub-section, we propose a greedy self-triggering policy $\mu$ that achieves the inequality defined in \cref{Eq:Sub-optimalityLevel}. To present the policy, we begin with the following lemma.
\begin{lemma}\label{Lemma:NecessaryHardConstraint}
If a self-triggering policy $\mu=(\pi,\tau)$ achieves the following inequality
\begin{equation}\label{Eq:NecessaryHardConstraint}
\mathbb{E}\left[ \sum_{t=0}^{\tau(x)-1} \beta^t c(x_t,\pi(x)) +  \alpha\beta^{\tau(x)} V(x_{\tau(x)})\middle\vert x_0 =x\right] \leq \alpha V(x),
\end{equation}
for all $x\in\mathcal{X}$, then we have $v^\mu(x) \leq \alpha V(x)$.
\end{lemma}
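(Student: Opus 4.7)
The plan is to iterate the hypothesized one-step inequality along the sequence of trigger times $t_0 = 0 < t_1 < t_2 < \cdots$ and then pass to the limit, so that the weighted value function term eventually vanishes and only the accumulated stage costs remain, producing exactly $v^\mu(x)$ on the right-hand side.

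First I would use the fact that, by the self-triggered structure \cref{Eq:StructureSTP} and the Markov property, the hypothesized inequality \cref{Eq:NecessaryHardConstraint} applied at any state $y \in \mathcal{X}$ gives
\[
\alpha V(y) \geq \mathbb{E}\left[\sum_{t=0}^{\tau(y)-1} \beta^t c(x_t,\pi(y)) + \alpha \beta^{\tau(y)} V(x_{\tau(y)}) \,\middle\vert\, x_0 = y\right].
\]
Applying this at $y = x$ produces a bound in terms of the first trigger interval plus a residual $\alpha \beta^{t_1} V(x_{t_1})$. Conditioning on $x_{t_1}$ and applying the inequality again at $y = x_{t_1}$, then using the tower property of conditional expectation together with the shifted stage costs $\sum_{t=t_1}^{t_2-1} \beta^t c(x_t, a_t)$ (where $a_t = \pi(x_{t_1})$ on that interval by the self-triggered structure), I would obtain, after $L$ iterations,
\[
\alpha V(x) \geq \mathbb{E}^{\mu}\left[\sum_{l=0}^{L-1}\sum_{t=t_l}^{t_{l+1}-1}\beta^{t} c(x_t, a_t) + \alpha\beta^{t_L} V(x_{t_L}) \,\middle\vert\, x_0 = x\right].
\]
I would make this induction step precise by invoking the time-homogeneous Markov property of the underlying MDP and Fubini's theorem, exactly as used in \Cref{Sec:TheoreticalResults} to derive the consolidated cost $\bar c$.

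Next I would send $L \to \infty$. Since $c$ is bounded and $\beta < 1$, the classical value function $V$ is uniformly bounded, say $\Vert V\Vert_\infty \leq C/(1-\beta)$. Because $\tau \geq 1$ we have $t_L \geq L$, so $\beta^{t_L} V(x_{t_L}) \to 0$ almost surely and in $L^1$, giving $\mathbb{E}^{\mu}[\alpha \beta^{t_L} V(x_{t_L})] \to 0$. By monotone convergence applied to the non-negative cost $c$, the finite partial sum converges to $\mathbb{E}^{\mu}\big[\sum_{t=0}^\infty \beta^t c(x_t,a_t)\,\vert\, x_0 = x\big] = v^\mu(x)$.

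Combining these two limits yields $\alpha V(x) \geq v^\mu(x)$ for all $x \in \mathcal{X}$, which is exactly the claimed performance guarantee. The main obstacle I expect is bookkeeping in the induction step: one must carefully use the Markov property to push the conditional expectation inside the sum and re-index stage costs over the self-triggered intervals so that they splice together into the single trajectory sum defining $v^\mu$. Once that is handled cleanly, the limiting argument is routine given the boundedness of $c$ and $\beta<1$.
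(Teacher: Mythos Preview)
Your proposal is correct and follows essentially the same approach as the paper: iterate the hypothesized inequality along the trigger times $t_0<t_1<\cdots$, use the tower property and Markov property to splice the inter-trigger costs into a single partial sum, and then pass to the limit $L\to\infty$. The only cosmetic difference is that the paper writes the resulting bound as a telescoping sum $\sum_{l=0}^L \beta^{t_l}\alpha\big(V(x_{t_l})-\beta^{t_{l+1}-t_l}V(x_{t_{l+1}})\big)$ and drops the nonnegative tail term $\alpha\beta^{t_{L+1}}V(x_{t_{L+1}})$ directly, whereas you carry the residual $\alpha\beta^{t_L}V(x_{t_L})$ explicitly and kill it in the limit via boundedness of $V$; both routes are equivalent.
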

\begin{proof}
See \Cref{Proof:NecessaryHardConstraint}.
\end{proof}

\Cref{Lemma:NecessaryHardConstraint} offers us a convenient way to find an policy that achieves the  performance level specified by $\alpha V(x)$ for all $x\in\mathcal{X}$ and for $\alpha \geq 1$. Since the agent aims to reduce the amount of sensing/communication resources (the rate of updating), he/she needs to find, for each $x\in\mathcal{X}$, the maximum $\Delta t_x\in \mathcal{T}$ such that there exists at least an action $a_x\in\mathcal{A}$ so that \cref{Eq:NecessaryHardConstraint} is satisfied with $\tau(x)$ and $\pi(x)$ replaced by $\Delta t_x$ and $a_x$ respectively. Then, \Cref{ProblemofHardConstraint} becomes solving the following problem for each $x\in\mathcal{X}$
\begin{equation}\label{Eq:OptimizationHardConstraint}
\begin{aligned}
&\max_{\Delta t_x \in \mathcal{T}, a_x\in \mathcal{A}} \Delta t_x\\
&\ \ \ \ \ s.t.\ \ \ \ \textrm{\cref{Eq:NecessaryHardConstraint}},
\end{aligned}
\end{equation}
where in \cref{Eq:NecessaryHardConstraint},  we replace $\tau(x)$ and $\pi(x)$ with $\Delta t_x$ and $a_x$ respectively.
\begin{theorem}\label{Theo:Well-Defineness}
If there exists an optimal policy $\phi^*$ for the classic MDP such that $v^{\phi^*}=V(x)$, then for any fixed $\alpha \geq 1$, there always exist a feasible set for (\ref{Eq:OptimizationHardConstraint}), i.e., the problem (\ref{Eq:OptimizationHardConstraint}) is well-defined.  
\end{theorem}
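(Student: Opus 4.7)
The plan is to exhibit an explicit feasible pair $(\Delta t_x, a_x)$ for every $x\in\mathcal{X}$, rather than argue abstractly about the feasible set. The natural candidate is the shortest possible inter-update interval, $\Delta t_x = 1$, paired with the action prescribed by the hypothesized classic-MDP optimal policy, $a_x = \phi^*(x)$. Since $1\in\mathcal{T}$ by the definition $\mathcal{T}=\{1,\ldots,\bar T\}$, this is an admissible point in the decision set of \cref{Eq:OptimizationHardConstraint}, so it suffices to verify that it satisfies the constraint \cref{Eq:NecessaryHardConstraint}.

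With $\Delta t_x = 1$ and $a_x = \phi^*(x)$, the left-hand side of \cref{Eq:NecessaryHardConstraint} collapses to
\begin{equation*}
c(x,\phi^*(x)) + \alpha\beta\, \mathbb{E}\bigl[V(x_1)\,\big\vert\, x_0=x,a_0=\phi^*(x)\bigr].
\end{equation*}
The key identity I would invoke next is the Bellman optimality equation for the classic MDP, which, under the standing assumption that $v^{\phi^*}=V$, reads
\begin{equation*}
V(x) = c(x,\phi^*(x)) + \beta\, \mathbb{E}\bigl[V(x_1)\,\big\vert\, x_0=x,a_0=\phi^*(x)\bigr].
\end{equation*}
Subtracting $\beta\,\mathbb{E}[V(x_1)\mid\cdot]$ and rewriting gives
\begin{equation*}
c(x,\phi^*(x)) + \alpha\beta\,\mathbb{E}[V(x_1)\mid\cdot] \;=\; V(x) + (\alpha-1)\beta\,\mathbb{E}[V(x_1)\mid\cdot].
\end{equation*}

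To bound the residual term, I would exploit the non-negativity hypothesis $c:\mathcal{X}\times\mathcal{A}\to\mathbb{R}^+$ stated in \Cref{sec:ProblemForm}. Non-negativity of $c$ together with the Bellman identity yields $\beta\,\mathbb{E}[V(x_1)\mid\cdot] = V(x) - c(x,\phi^*(x)) \le V(x)$. Substituting this into the display above and using $\alpha\ge 1$ to preserve the inequality direction on the factor $(\alpha-1)$, I obtain
\begin{equation*}
V(x) + (\alpha-1)\beta\,\mathbb{E}[V(x_1)\mid\cdot] \;\le\; V(x) + (\alpha-1)V(x) \;=\; \alpha V(x),
\end{equation*}
which is exactly the constraint \cref{Eq:NecessaryHardConstraint}. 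Hence $(\Delta t_x,a_x)=(1,\phi^*(x))$ lies in the feasible set for every $x\in\mathcal{X}$, proving well-definedness of \cref{Eq:OptimizationHardConstraint}.

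There is no genuine obstacle in this argument; the only thing to be careful about is the edge case $\alpha = 1$, where both sides of the constraint coincide with the Bellman equation and feasibility holds with equality. I would also flag that the proof constructs a universal feasible fallback: even in the worst case, a self-triggered agent can always imitate the classic optimal policy one step at a time, so increasing $\alpha$ can only enlarge the feasible set and thereby permit longer inter-update intervals $\Delta t_x$.
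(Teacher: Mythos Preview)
Your proof is correct and follows essentially the same approach as the paper: both exhibit the explicit feasible pair $(\Delta t_x,a_x)=(1,\phi^*(x))$, invoke the one-step Bellman identity for $V$, and use non-negativity of $c$ together with $\alpha\ge 1$ to close the inequality. The only difference is cosmetic algebra---you bound $(\alpha-1)\beta\,\mathbb{E}[V(x_1)\mid\cdot]\le(\alpha-1)V(x)$, whereas the paper drops the nonnegative term $(\alpha-1)c(x,a_x)$ from $\alpha c(x,a_x)$---and your added remarks on the $\alpha=1$ edge case and the ``fallback'' interpretation are useful commentary not present in the paper.
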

\begin{proof}
See \Cref{Proof:Well-Defineness}.
\end{proof}
\begin{remark}[The Greedy Choice Property] Note that the self-triggering policy for \Cref{ProblemofHardConstraint} follows the greedy rule. At time $t_l$. the next update time $t_{l+1} = t_l + \tau(x_{t_l})$ is maximized while ensuring \cref{Eq:NecessaryHardConstraint} without considering the effect of this choice on the number of future updates after $t_{l+1}$. Different from the greedy policy, the self-triggering policy $\mu^*$ from \Cref{Theo:DPEquation} for solving \Cref{ProblemOfPenalty} follows the dynamic programming rule, i.e., current choices are made taking into account the 
influence of current choices on the future possibilities.
\end{remark}

So far in this section, we have developed \Cref{Theo:DPEquation} and \Cref{Theo:Well-Defineness} to help find the self-triggering policies that can solve \Cref{ProblemOfPenalty} and \Cref{ProblemofHardConstraint}.  The theorems were developed without specifying the state space $\mathcal{X}$, the action space $\mathcal{A}$, and the transition probabilities, except that we require $\mathcal{X}$ to be Polish and $c(\cdot,\cdot)$ to be bounded and non-negative on $\mathcal{X}\times \mathcal{A}$. Hence, The results are applicable to a variety of models such as LQG control \cite{gommans2014self,akashi2018self,huang2020infinite}, inventory control \cite{feinberg2016optimality}, and queueing systems \cite{wikecek2016mean,huang2019continuous}. The two theorems pave the way for the computation and implementation of the self-triggering policies for various Markov decision processes. In the next section, we present a gridworld example to illustrate the computation and implementation of self-triggering policies using \Cref{Theo:DPEquation} and \Cref{Theo:Well-Defineness}.

\section{Computation and Implementation: A Gridworld Case Study}\label{sec:NumericalStudy}
In this section, we consider a rectangular gridworld representation of a simple MDP for illustration purposes. The gridworld environment made up of $4 \times 6$ cells is shown in \ref{fig:GridWorldMap}, where grey areas are walls. An agent lives in this gridworld aiming to navigate from the start cell to the target cell. The states, representing the cell the agent lives in, are $\mathcal{X}=\{1,2,\cdots,19,20\}$. There are four actions possible at each state, $\mathcal{A}=\{\textit{north},\textit{south},\textit{east},\textit{west}\}$. Walls block the agent's path. The actions that would take the agent off the grid or into the walls in fact leave the state unchanged. State $x=20$ is an absorbing state such that once the agent reaches the target cell, he/she enters the absorbing state with probability one (w.p.1). The agent aims to reach the target as fast as soon. Hence, we define 
\begin{equation}\label{Eq:GridworldCost}
c(x,a) = \begin{cases}
10,\ \ \ \textrm{if }x\in\mathcal{X}\backslash \{19,20\},\\
0,\ \ \ \textrm{if }x\in\{19,20\}.
\end{cases}
\end{equation}

\subsection{A Non-Windy Gridworld}
We first consider a non-windy setting where each action deterministically causes the agent to move one cell in the respective direction. Let $ P^d$ denotes the transition probabilities in a non-wind setting. For instance, we have $P^d(6|1,\textit{north}) = 1$. We consider the discount factor $\beta=0.95$, and the bound on the waiting time for the next update is $\bar{T}=6$. The update penalty $O$ is subject to change.
\begin{figure}
    \centering
    \includegraphics[width=0.8\columnwidth]{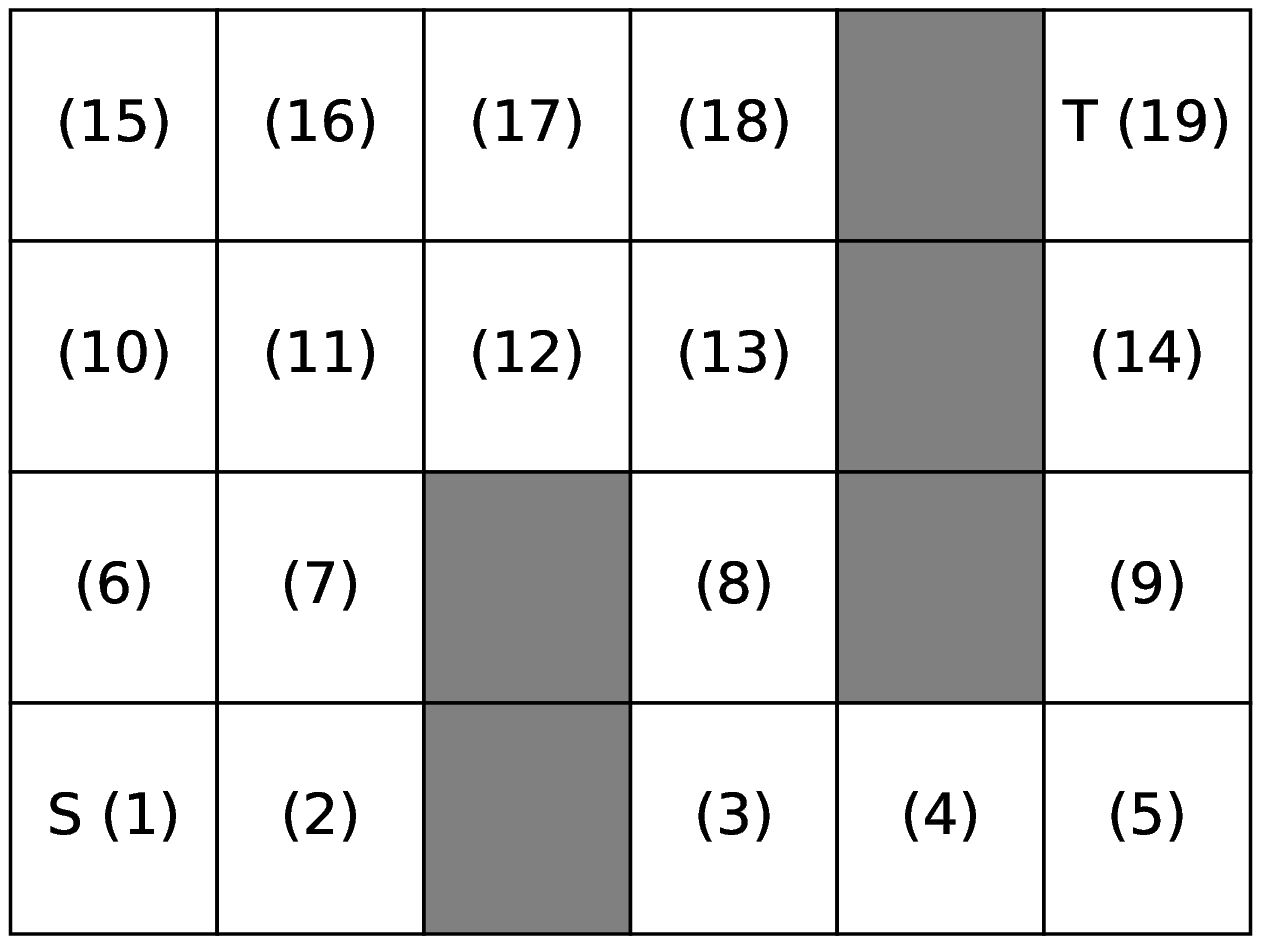}
    \caption{A gridworld example: Grey areas represent walls, \textit{S} stands for the start cell, \textit{T} denotes the target cell, and the integers in the brackets are the indices of states.}
    \label{fig:GridWorldMap}
\end{figure}

We set the initial value function estimate to be $V_{st,0}(x) =0, \forall x\in\mathcal{X}$. We conduct value iteration using the DP equation with controlled lookahead in \cref{Eq:DPEOptimizedLookahead}: 
\begin{equation*}
\begin{aligned}
V_{st,k+1}(x)  = \min_{a\in\mathcal{A},\Delta t  \in \mathcal{T}} \mathbb{E}\Bigg[&\sum_{t=0}^{\Delta t -1} \beta^t c(x_t,a) +  \beta^{\Delta t} \big(V_{st,k}(x_{\Delta t}) \\
&+O\big) \Bigg \vert x_0 =x, a_t= a,\forall t<\Delta t\Bigg],
\end{aligned}
\end{equation*}
where every term in the expectation operator can be computed using transition probabilities $P^d$. The iteration stops when $\Vert V_{st,k+1} -V_{st,k} \Vert\leq 10^{-5}$, and the results show that the tolerance can be achieved within $25$ iterations for every update penalties $O$ we study in this paper. 

\begin{figure}
    \centering
    \includegraphics[width=0.8\columnwidth]{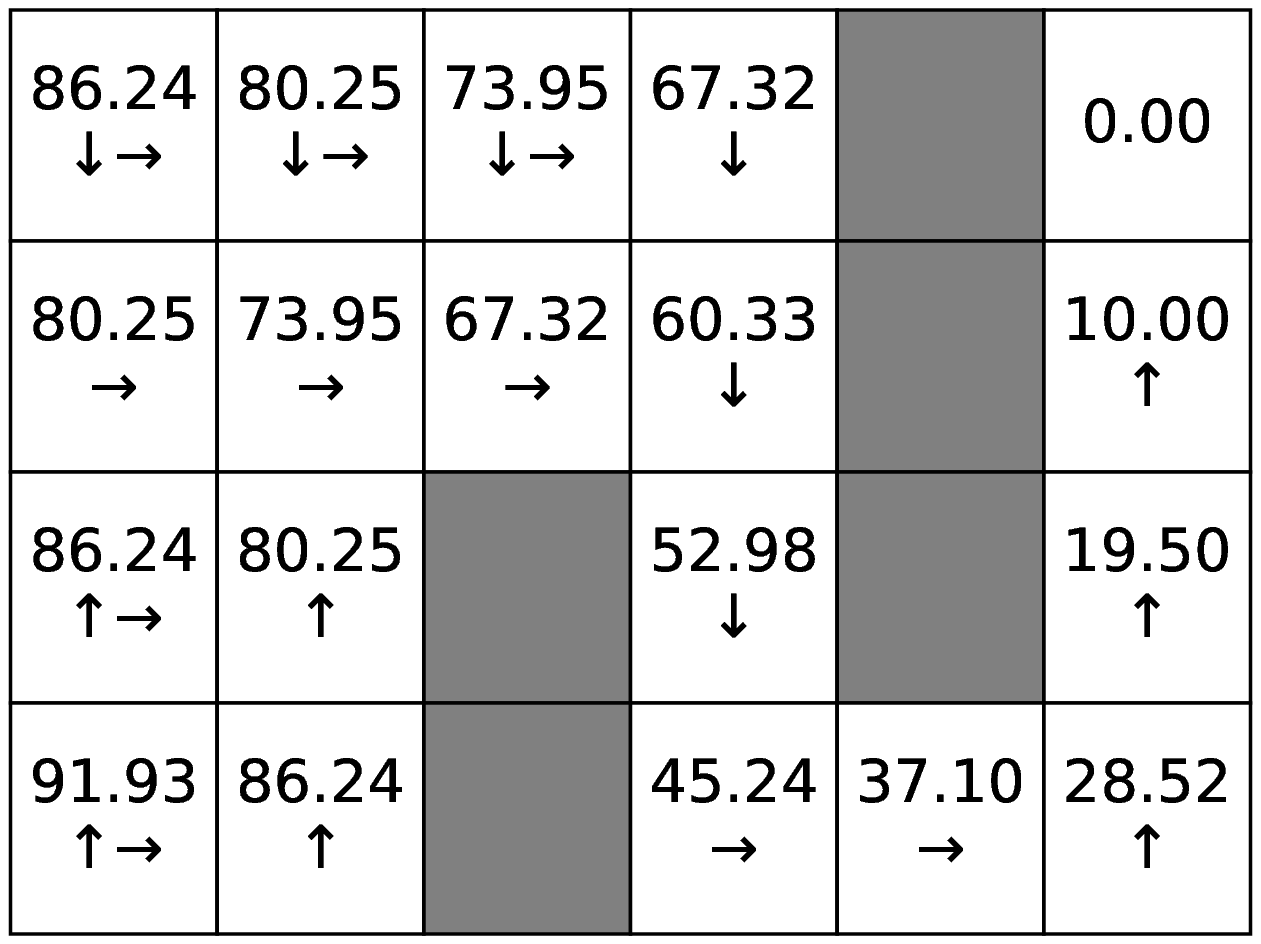}
    \caption{A non-windy gridworld: The value $V(x)$ (the upper value) and the optimal action $\phi^*(x)$ (the lower pointers) in the classic MDP for each $x\in X$.}
    \label{fig:ClassicMDPValueAction}
\end{figure}

\begin{figure}
    \centering
    \begin{subfigure}{0.49\columnwidth}
        \centering
        \includegraphics[width=\columnwidth]{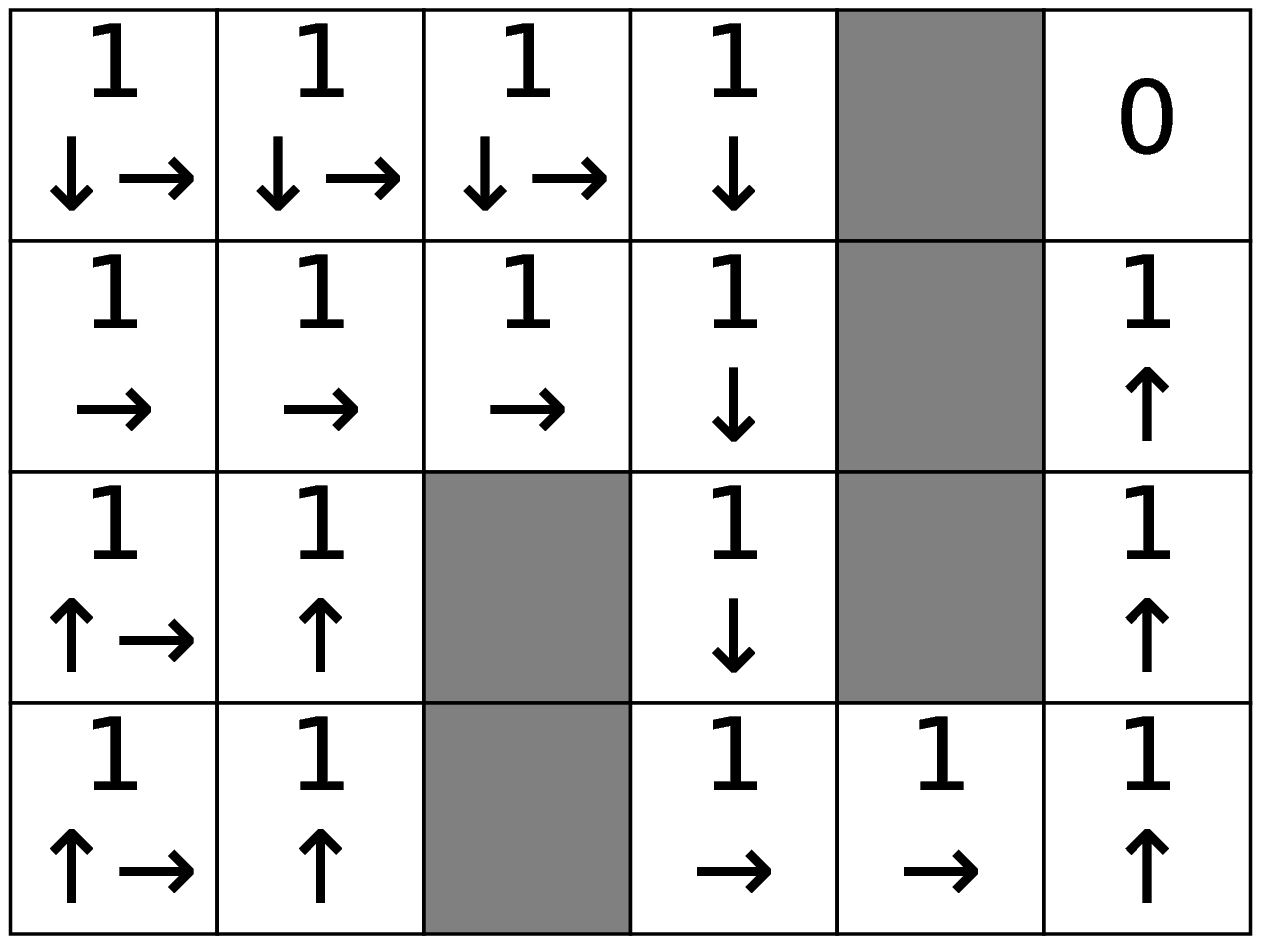}
        \caption{{\small The update penalty $O=0$.}}    
    \end{subfigure}
    \hfill
    \begin{subfigure}{0.49\columnwidth}  
        \centering 
        \includegraphics[width=\columnwidth]{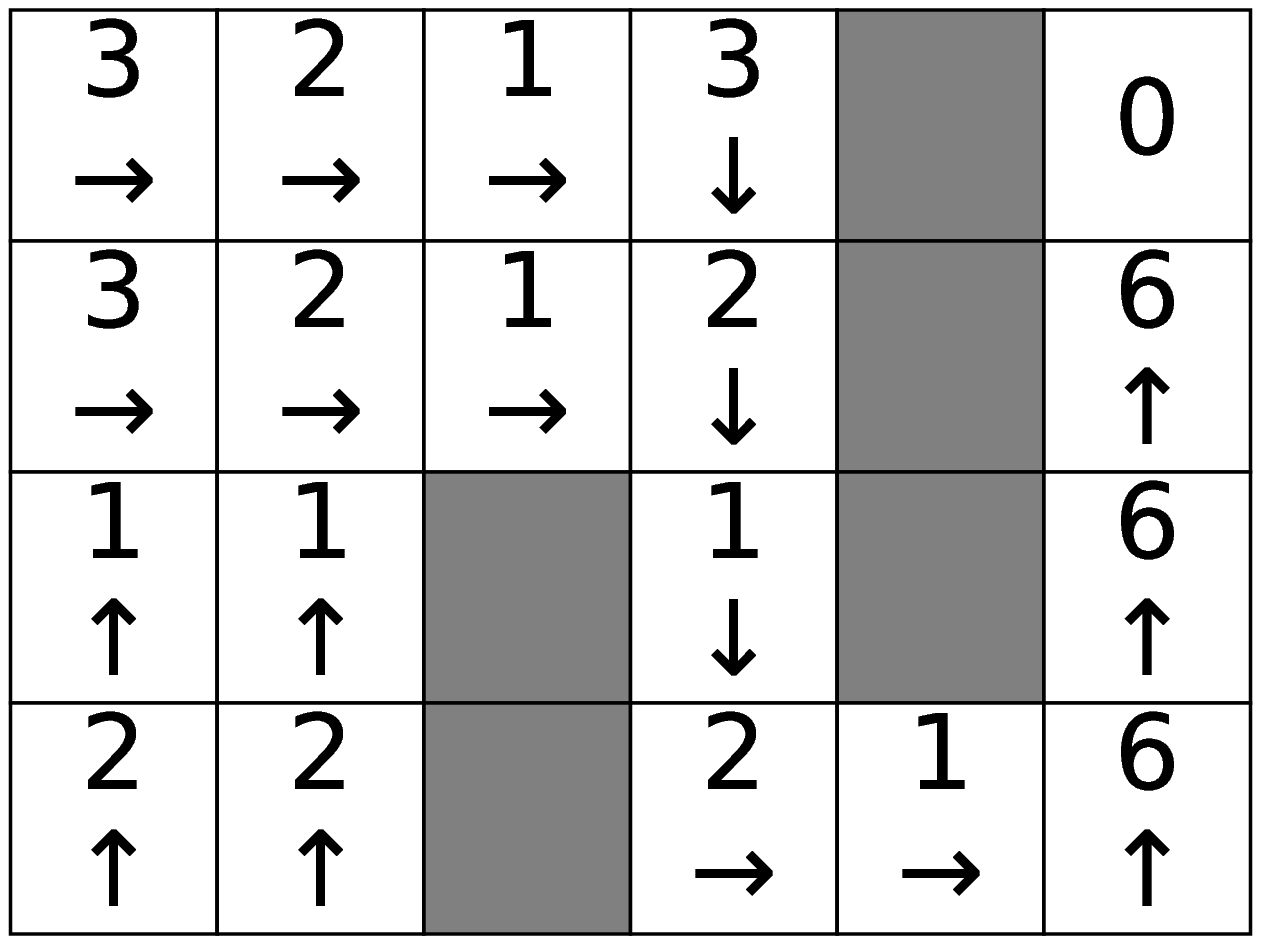}
        \caption{{\small  The update penalty $O=0.1$.}}    
    \end{subfigure}
    \vskip\baselineskip
    \begin{subfigure}{0.49\columnwidth}   
        \centering 
        \includegraphics[width=\columnwidth]{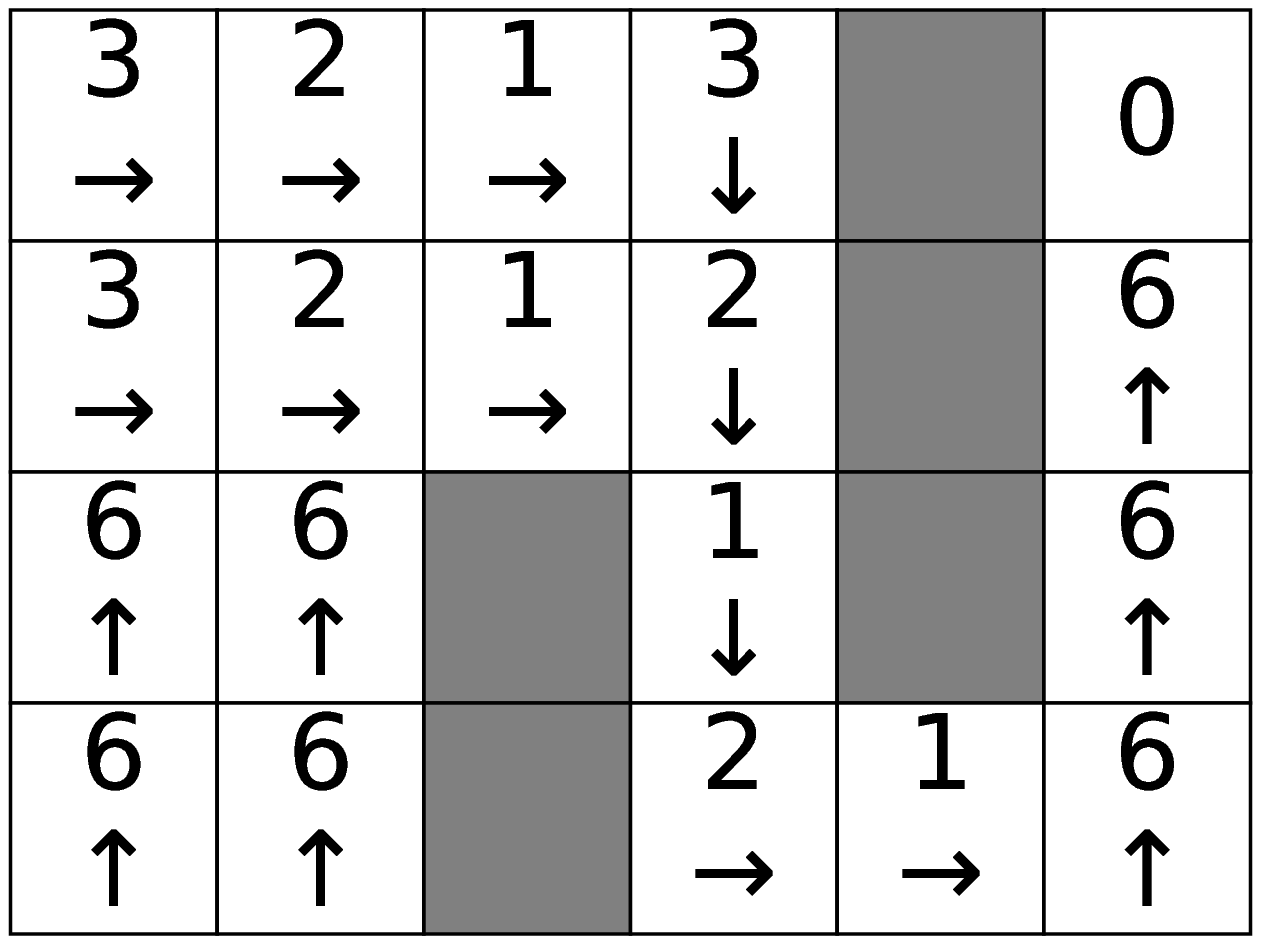}
        \caption{{\small The update penalty $O=40$.}}    
    \end{subfigure}
    \hfill
    \begin{subfigure}{0.49\columnwidth}   
        \centering 
        \includegraphics[width=\columnwidth]{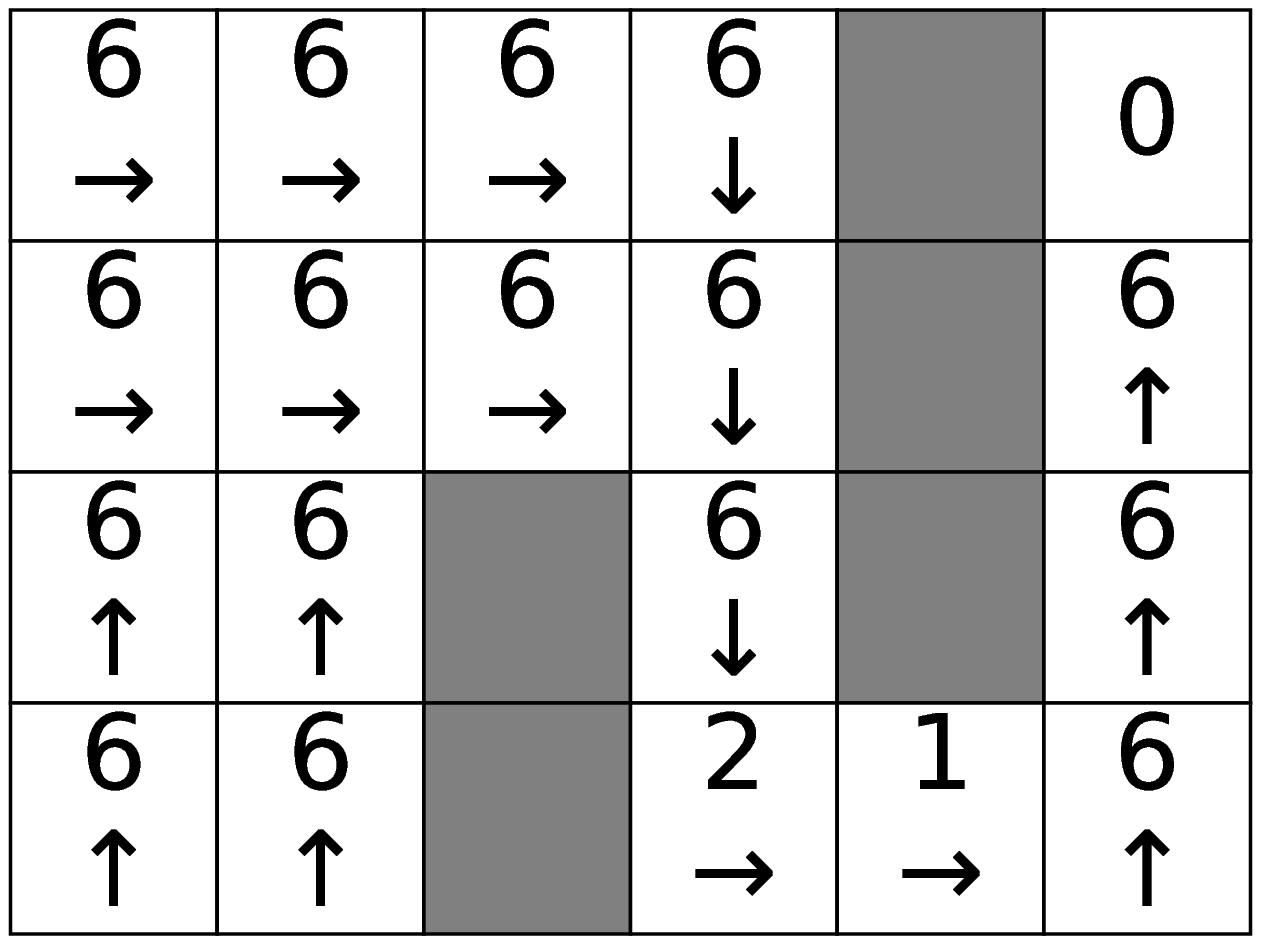}
        \caption{{\small The update penalty $O=80$.}}    
    \end{subfigure}
    \caption{A non-windy gridworld: The optimal triggering time policy $\tau^*(x)$ (the upper value) and the optimal control policy $\pi^*(x)$ (the lower pointers) for each $x\in\mathcal{X}$ under different update penalties $O$. (For \Cref{ProblemOfPenalty})} 
    \label{fig:non-windyActions}
\end{figure}

In Fig. \ref{fig:non-windyActions}, we present the optimal triggering time $\tau(x)^*$ and the optimal control policy $\pi^*(x)$ for each state when the update penalties are $O=0,0.1,40,80$. As we can see from Fig. \ref{fig:non-windyActions} (a), when $O =0$, since there is no update penalty, the optimal triggering time is to update every time, i.e., $\tau^*(x)=1,\forall x\in\mathcal{X}$, and the optimal control policy is the same as its counterpart in a classic setting, i.e., $\pi^*(x) = \phi^*(x),\forall x\in \mathcal{X}$. The policy offers three paths from the start cell to the target cell: $1\rightarrow 2 \rightarrow 7 \rightarrow 11 \rightarrow \cdots\rightarrow 19$, $1 \rightarrow 6 \rightarrow 7 \rightarrow 11 \rightarrow \cdots \rightarrow 19$, and  $1\rightarrow 6 \rightarrow 10 \rightarrow 11 \rightarrow \cdots \rightarrow 19$. Each path takes $12$ steps to complete, covers $13$ cells, and there are $12$ updates. 

Suppose a remote controller controls the agent, and the communication between them is expensive. Each communication/update induces an update penalty $O$. When $O= 0.1$, as is shown in \ref{fig:non-windyActions} (b), an update is only triggered when there is a need to update the action. For example, when the agent is at state $x=1$ at time $0$, the optimal control policy is heading north, and the optimal waiting time is $2$ steps. That means at time $t=0$, the agent communicates with the controller and is commanded to go north and fix this action for $2$ time steps, after which a new update will be sent. Since there is a straight path to the target cell, in a non-windy setting, at states $x=8,9,14$, the controller chooses the maximum allowed waiting time $\bar{T}=6$. There are few points worth noticing when we compare Fig. \ref{fig:non-windyActions} (a) and (b): First, when the update penalty $O=0.1$, the optimal policy, as is shown in Fig.  \ref{fig:non-windyActions}, provides one shortest path to the target cell: $1 \rightarrow 6 \rightarrow 10 \rightarrow 11 \rightarrow \cdots \rightarrow 19$. The path takes $12$ time steps to complete, which is the same as when $O=1$. However, the updates are only triggered when the agent was at states $x = 10,16,3,5$. Hence, the self-triggering policy under $O=0.1$ requires only $4$ updates to achieve the same shortest path as the classic optimal policy. That means the self-triggering policy saves $(12-4)/12 = 66.47\%$ of the communication resources required in a classic policy. Second, When the update penalty is $O=0.1$, at state $x=1$, going west is no longer an optimal choice since going west requires more updates (5 in this case) to achieve the shortest path.
Third, in Fig. \ref{fig:non-windyActions} (b), the optimal triggering time and the corresponding optimal control at each state always take the agent to the next turning points. For instance, at $x=10$, the optimal action is to go east and to fix this direction for $3$ steps. This optimal action and optimal waiting time take the agent to state $13$, where the agent has to turn south to reach the target cell. There are two reasons to explain this phenomenon: 1. the update penalty is relatively low, compared with the stage cost defined in \cref{Eq:GridworldCost}, so that achieving the shortest path within the minimum number of steps is still a priority. 2. In a non-windy setting, the actions deterministically move the agent toward the desired direction, which means the controller can anticipate the agent's trajectory in future steps. Hence, no update is needed between the two turning points.

The computed self-triggering policy under $O=40$ is provided in Fig. \ref{fig:non-windyActions} (c). The self-triggering policy gives a longer path to reduce the overhead of updating: $1 \rightarrow \cdots \rightarrow 15 \rightarrow \cdots \rightarrow 18 \rightarrow 3 \rightarrow \cdots \rightarrow 19$,  which takes $17$ time steps (stay at state $15$ for $4$ time steps due to $6$ time steps of going north without update), covers $15$ cells, and requires $4$ updates to complete. Even though the self-triggering policy requires the same number of updates as the case when $O=0.1$, the updates are triggered later than their counterparts in the case of $O=0.1$. Hence, the updates produce less costs due to the discount effect. As the update penalty increases to $O=80$ (see Fig. \ref{fig:non-windyActions} (d)), the optimal time policy at most of the states becomes to wait as long as possible for next update, i.e., $\tau^*(x) = \bar{T}$, for $x\in\mathcal{X}\backslash \{3,4\}$.

\subsection{A Windy Gridworld}

Next, we consider a windy gridworld where the wind takes the agent north $10\%$ of the chance and west $10\%$ of the chance. And $80\%$ of the time, the agent's movement follows its action. In the windy gridworld, the effect of boundaries and walls still applies. The transition probability in a windy setting is defined by $P^w$. For example, if the agent is at state $x=11$ and chooses to go east, we have $P^w(12|11,\textit{east})= 0.8$, $P^w(10|11,\textit{east})= 0.1$, and $P^w(16|11,\textit{east})= 0.1$. We run value iterations using the DP equation with controlled lookahead given in \cref{Eq:DPEOptimizedLookahead} under the transition probabilities $P^w$ in the windy environment.

The optimal timing policy and optimal control policy are presented in Fig. \ref{fig:windyActions}. One difference in a windy environment is that the control chosen will not deterministically cause the movement of the agent. That means if there is no update, the controller needs to estimate the agent's trajectory, and there exists an estimation error. Hence, we hypothesize that the agent needs to trigger the update more frequently than in a non-windy environment to know his/her location and then adjust his/her control.

Fig. \ref{fig:windyActions} (a) presents the case when there is no update penalty, i.e., $O=0$. The optimal timing policy is to observe/update every step. The control at state $x=6$ becomes going east to avoid being taken to the northwest corner by the wind. At states $x= 15,16,17$, going south is not an optimal control anymore since if the agent goes south, there is a chance that the wind would take the agent back to the north. When the update penalty is small, i.e., $O=0.1$, the optimal policy is listed in Fig. \ref{fig:windyActions} (b). There are two points worth mentioning when we compare the windy setting and the non-windy setting:
\begin{enumerate}
    \item When $O = 0.1$, the agent updates more frequently in a windy setting. For example, at $x=5$, the agent will update the next step in a windy setting, while the agent will update $6$ steps later in a non-windy setting. One of the reasons is that in a windy setting, the agent has to update in the next step to make sure he/she goes to state $x=9$ instead of being blown by the wind to state $x=4$. This result backs up our hypothesis that the agent in a windy world needs to trigger the update more frequently than in a non-windy environment. 
    \item When $O=40$, Fig. \ref{fig:windyActions} (c) shows some interesting and unexpected results.  The agent waits longer for the next update in a windy setting than in a non-windy setting shown in Fig. \ref{fig:non-windyActions} (c). This result contradicts our hypothesis that the agent tends to update more frequently in a noisy environment. For example, if at time $t$, the agent is at state $11$, the next time the agent will update is $t+6$, which is longer than its counterpart in  Fig. \ref{fig:non-windyActions} (c). One explanation is that since the control is to head east, and the wind pushes the agent north or west, there is no need for the agent to update its action. Eventually, the agent will be more likely to be at state $18$ or $13$ after $6$ steps of fixing his/her control of going east.
\end{enumerate}
When $O=80$, the optimal time policy at every step increases to the maximum allowed waiting time $\bar{T}=6$ to reduce the update penalties.

\begin{figure}
    \centering
    \begin{subfigure}{0.49\columnwidth}
        \centering
        \includegraphics[width=\columnwidth]{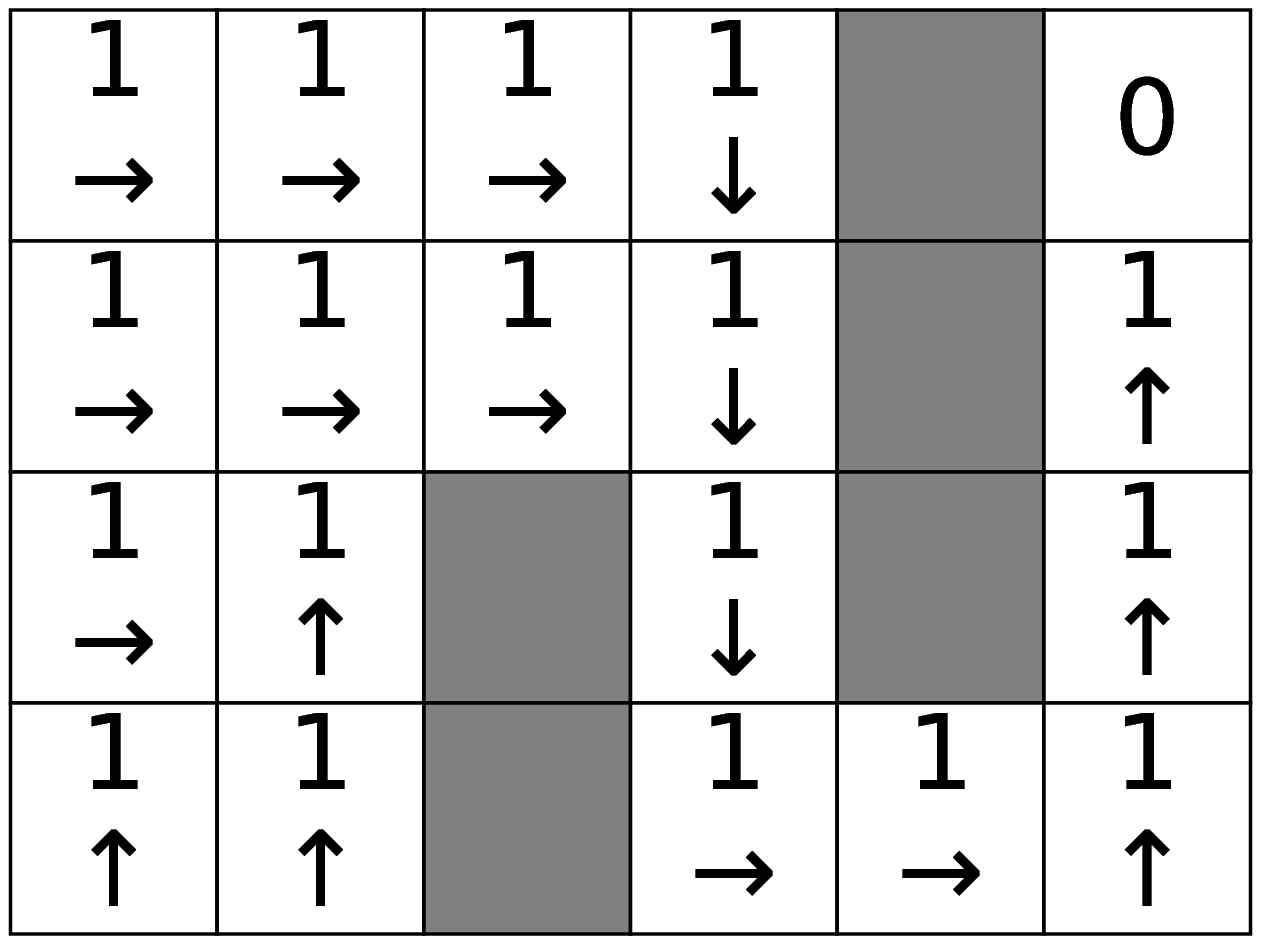}
        \caption{{\small The update penalty $O=0$.}}    
    \end{subfigure}
    \hfill
    \begin{subfigure}{0.49\columnwidth}  
        \centering 
        \includegraphics[width=\columnwidth]{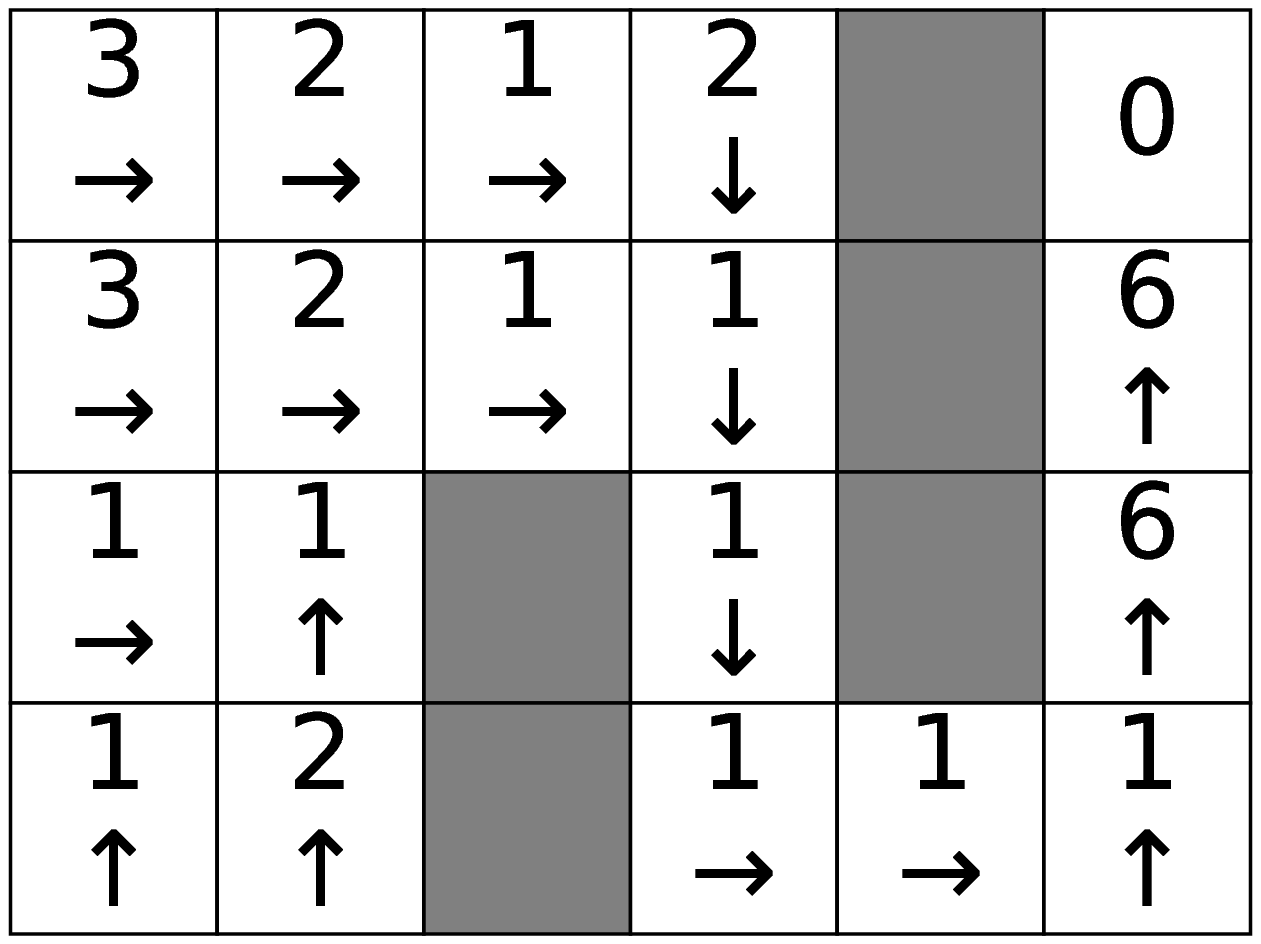}
        \caption{{\small  The update penalty $O=0.1$.}}    
    \end{subfigure}
    \vskip\baselineskip
    \begin{subfigure}{0.49\columnwidth}   
        \centering 
        \includegraphics[width=\columnwidth]{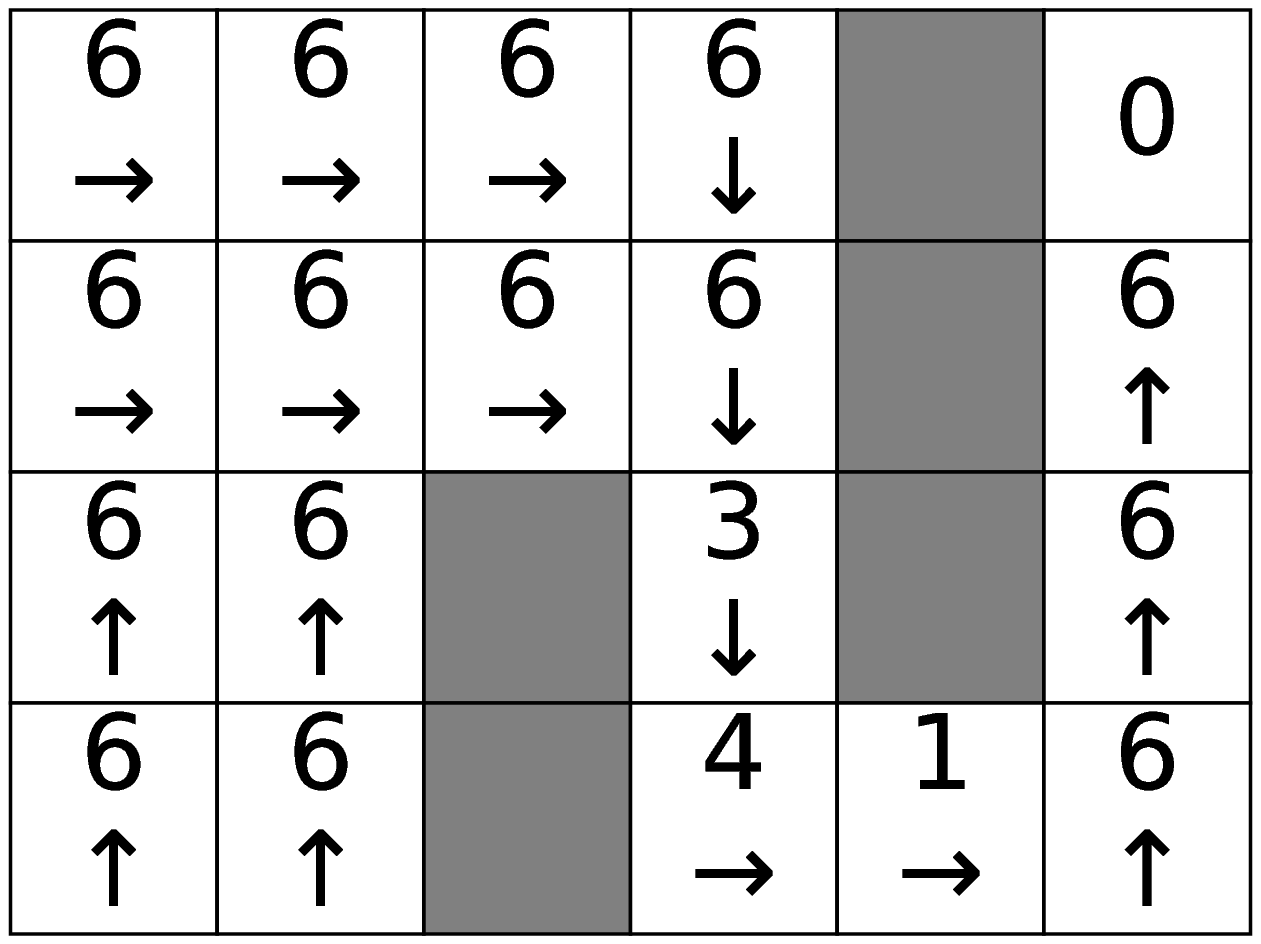}
        \caption{{\small The update penalty $O=40$.}}    
    \end{subfigure}
    \hfill
    \begin{subfigure}{0.49\columnwidth}   
        \centering 
        \includegraphics[width=\columnwidth]{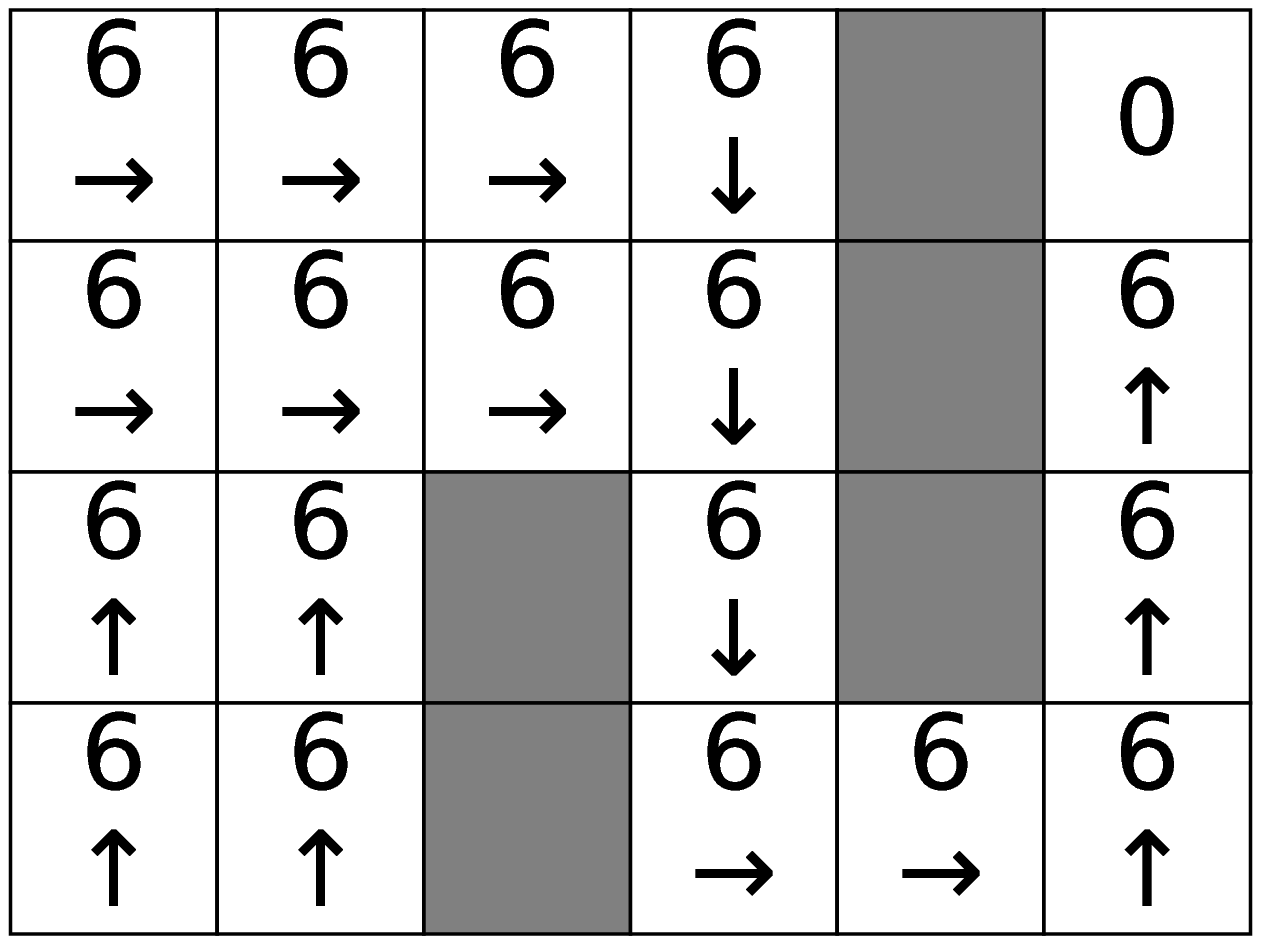}
        \caption{{\small The update penalty $O=80$.}}    
    \end{subfigure}
    \caption{A windy gridworld: The optimal triggering time policy $\tau^*(x)$ (the upper value) and the optimal control policy $\pi^*(x)$ (the lower pointers) for each $x\in\mathcal{X}$ under different update penalties $O$. (For \Cref{ProblemOfPenalty})} 
    \label{fig:windyActions}
\end{figure}

\subsection{Performance Guaranteed Policies}

In the previous subsections, we solve \Cref{ProblemOfPenalty} in the context of a gridworld and obtains the optimal self-triggering policy $\mu^*=(\tau^*,\pi^*)$. Just to remind that we have $\phi:\mathcal{X}\rightarrow \mathcal{A}$, which is the policy in the classic setting, and the self-triggering policy $\mu:\mathcal{X}\rightarrow \mathcal{T}\times \mathcal{A}$ in self-triggered MDPs. To differentiate the self-triggering policy we obtain for \Cref{ProblemOfPenalty} and the policy for \Cref{ProblemofHardConstraint}, we name them $\mu_1^* = (\tau^*_1, \pi_1^*)$ and $\mu_2^* = (\tau_2^*,\pi_2^*)$ respectively.

The self-triggering policy $\mu_1^*$ is optimal with respect to a specified update penalty $O$. However, it does not provide an explicit performance guarantee under the original cost criterion. Instead, the self-triggering policy $\mu_2^*$ provides a pre-specified level of performance guarantee.

As a result of the discussions in \Cref{Subsec:PerfGuaranteed}, the steps to compute a self-triggering policy $\mu_2^*$ for \Cref{ProblemofHardConstraint} is given as follows: 
\begin{enumerate}
\item Compute the value function $\{V(x), x\in\mathcal{X}\}$ of the MDP in the classic setting. 
\item For each $x\in\mathcal{X}$, select $\Delta t_x =\bar{T}$.
\item Compute 
{\small
\begin{equation}\label{ComputationStepsProblem2}
\begin{aligned}
\tilde{V}(x) =
\min_{a\in\mathcal{A}} \mathbb{E}\left[ \sum_{t=0}^{\Delta t_x-1} \beta^t c(x_t,a) +  \alpha\beta^{\Delta t_x} V(x_{\Delta t_x})\middle\vert x_0 =x\right],\\
a^*_x =
\arg\min_{a\in\mathcal{A}} \mathbb{E}\left[ \sum_{t=0}^{\Delta t_x-1} \beta^t c(x_t,a) +  \alpha\beta^{\Delta t_x} V(x_{\Delta t_x})\middle\vert x_0 =x\right].
\end{aligned}
\end{equation}
}
\item If $\tilde{V}(x) > \alpha V(x)$, set $\Delta t_x = \Delta t_x -1$, repeat step 3). Otherwise, $\tau_2^*(x) = \Delta_x$, $\pi^*_2(x) = a_x^*$.
\end{enumerate}

The optimization problem in \cref{ComputationStepsProblem2} admits a closed-form solution for models such as LQG control \cite{gommans2014self} and inventory control \cite{feinberg2016optimality}. For the windy gridworld model, we compute self-triggering policies following the steps for various levels of sub-optimality. The results are presented in Fig. \ref{fig:WindyActions_Problem2}. As we can see from Fig. \ref{fig:WindyActions_Problem2} (a), the self-triggering policy $\mu_2^*$ can achieve a full level of optimality, i.e., $v^{\mu_2^*}(x) =  V(x),\forall x\in\mathcal{X}$, while requiring less communication/sensing resources. When the level of sub-optimality is $\alpha =1.1$, as one can see from Fig. \ref{fig:WindyActions_Problem2} (b), at most states, the optimal timing policy is to wait for two or more than two steps for the next update. That means the self-triggering policy $\mu_2^*$ can save more than $50\%$ communication/sensing resources while suffering only $10\%$ of performance degradation. If one can tolerate a higher level of degradation, one can set $\alpha$ to a higher value and compute the corresponding self-triggering policy $\mu_2^*$. The cases when $\alpha=1.4$ and $\alpha=2$ are presented in Fig. \ref{fig:WindyActions_Problem2} (c) and (d). As one expects, the higher $\alpha$ is (more performance degradation one can tolerate), the fewer updates needed (less communication/resources consumed). 

\begin{figure}
    \centering
    \begin{subfigure}{0.49\columnwidth}
        \centering
        \includegraphics[width=\columnwidth]{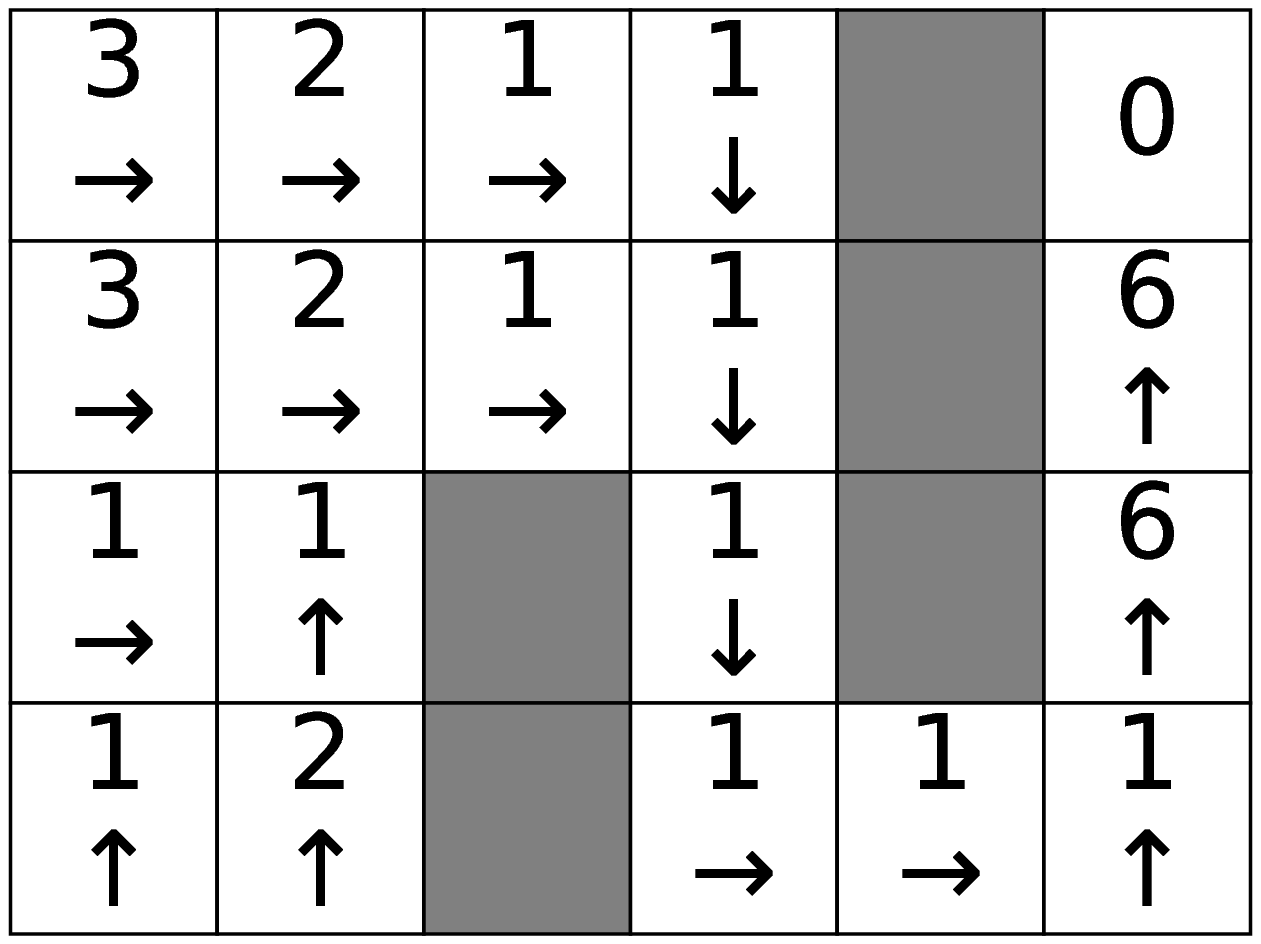}
        \caption{{\small The pre-specified level of sub-optimality penalty $\alpha = 1$.}}    
    \end{subfigure}
    \hfill
    \begin{subfigure}{0.49\columnwidth}  
        \centering 
        \includegraphics[width=\columnwidth]{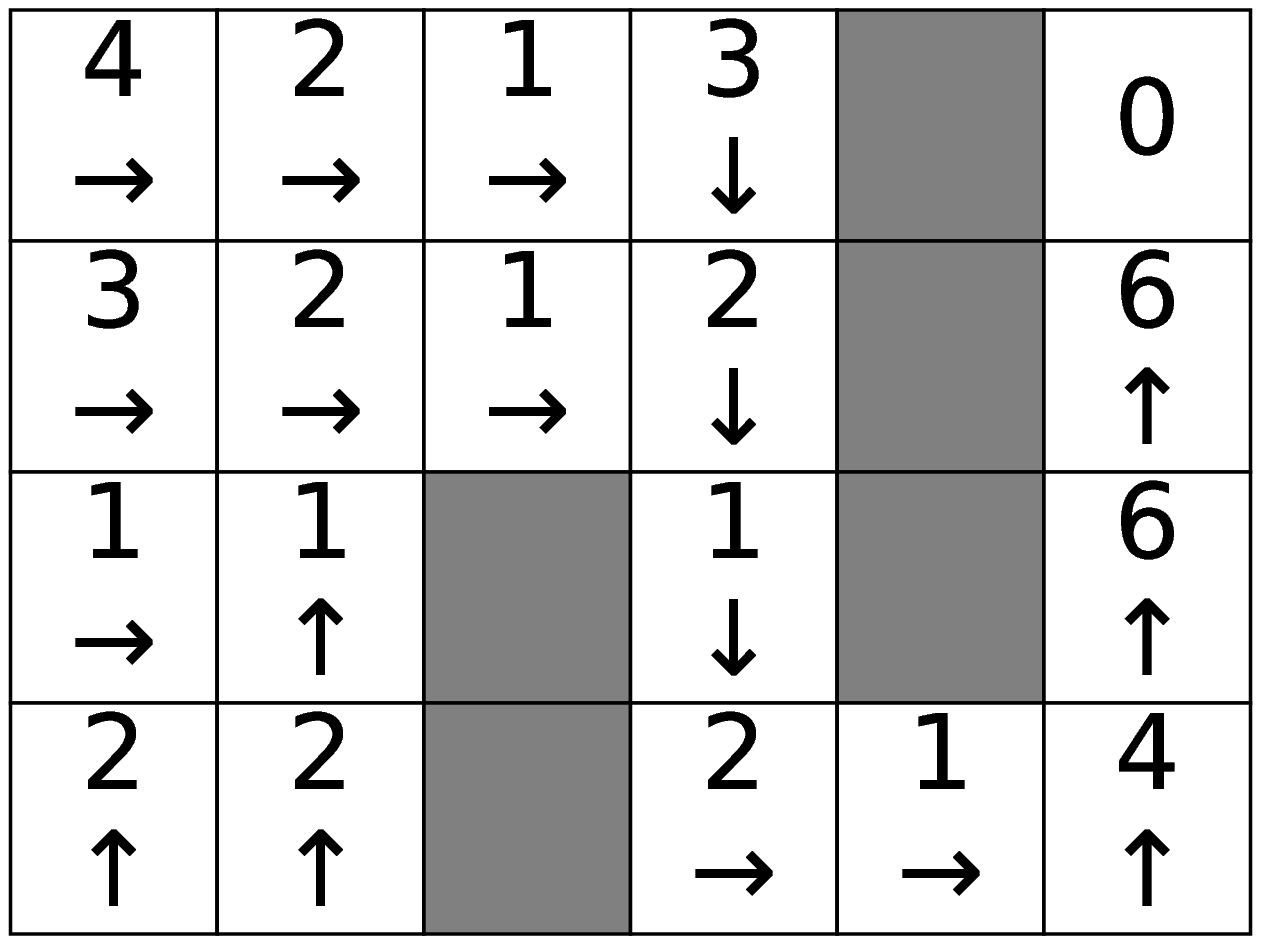}
        \caption{{\small  The pre-specified level of sub-optimality penalty $\alpha = 1.1$.}}    
    \end{subfigure}
    \vskip\baselineskip
    \begin{subfigure}{0.49\columnwidth}   
        \centering 
        \includegraphics[width=\columnwidth]{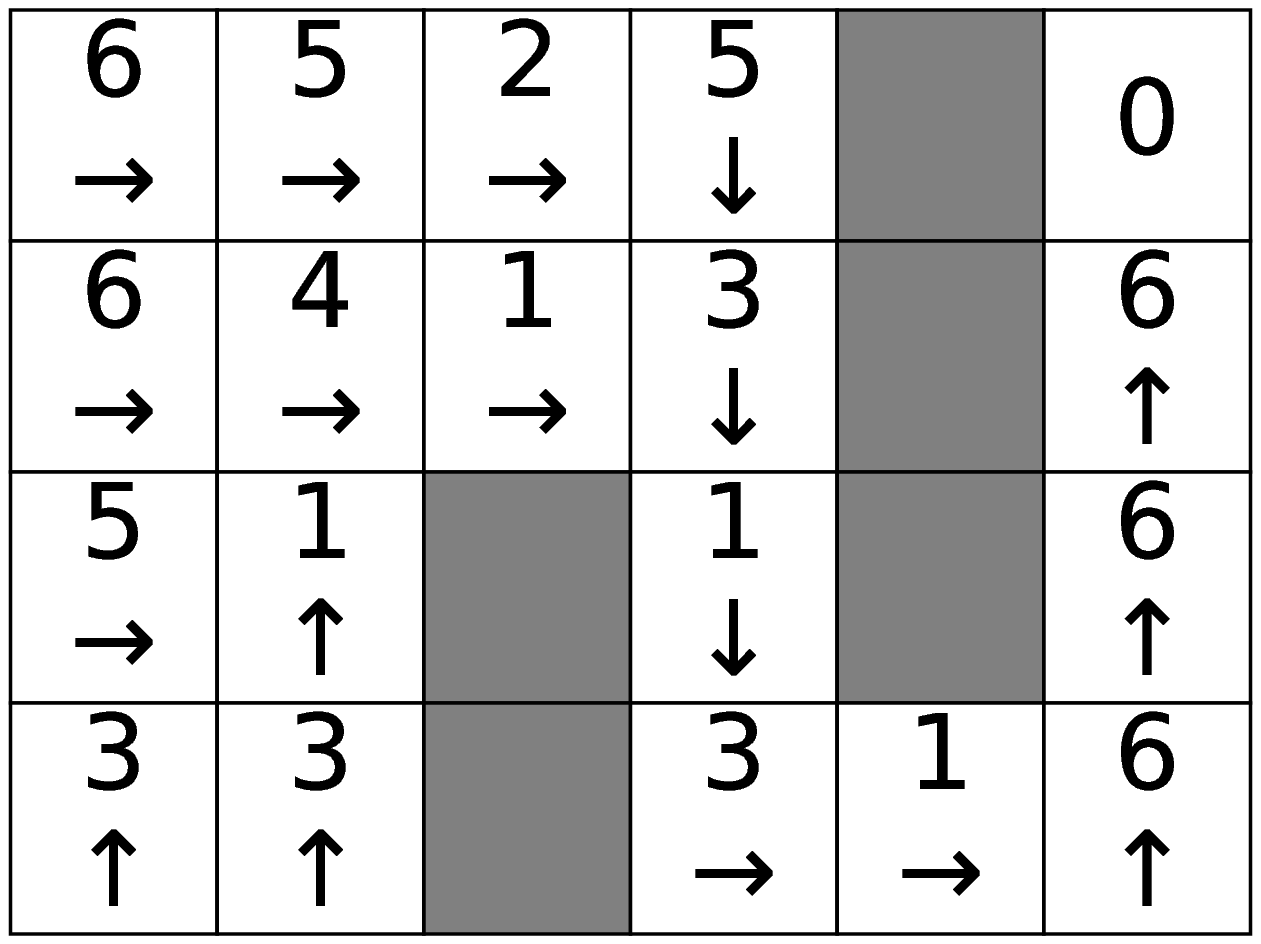}
        \caption{{\small The pre-specified level of sub-optimality penalty $\alpha = 1.4$.}}    
    \end{subfigure}
    \hfill
    \begin{subfigure}{0.49\columnwidth}   
        \centering 
        \includegraphics[width=\columnwidth]{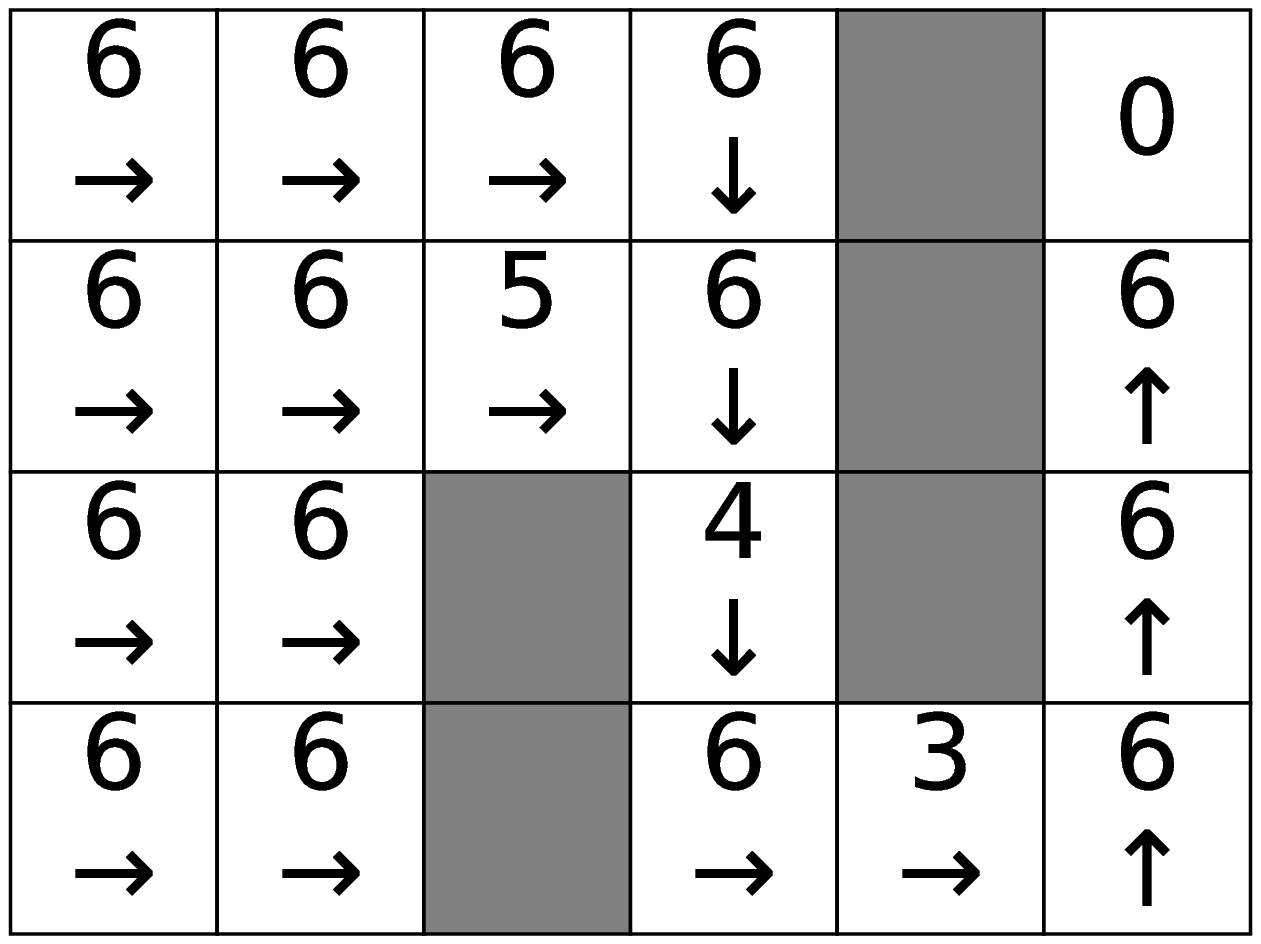}
        \caption{{\small The pre-specified level of sub-optimality penalty $\alpha = 2$.}}    
    \end{subfigure}
    \caption{A windy gridworld: The optimal triggering time policy $\tau_2^*(x)$ (the upper value) and the optimal control policy $\pi_2^*(x)$ (the lower pointers) for each $x\in\mathcal{X}$ under various sub-optimality requirements. (For \Cref{ProblemofHardConstraint})}. 
    \label{fig:WindyActions_Problem2}
\end{figure}


\section{Conclusions}
In this paper, two self-triggering policies are obtained by proposing two frameworks that convey two different philosophies. \Cref{ProblemOfPenalty} introduces a soft constraint, i.e., a update penalty that penalizes frequent use of communication resources and \Cref{ProblemofHardConstraint} applies a hard constraint on the level of sub-optimality while maximizing the triggering time to resources consumption. Both policies are shown to be effective in reducing the use of communication resources in the gridworld examples. Future endeavors can focus on developing stability guarantees of self-triggered policy for controlled Markov chain, and learning when to trigger, i.e., leveraging reinforcement learning techniques for unknown MDP models.




\appendix
\subsection{Proof of \Cref{Theo:DPEquation}}\label{Proof:DPEquation}
\begin{proof}
We prove the theorem by constructing a consolidated Markov decision process problem. A close look at \cref{Eq:CostFunPenaltyProblem} shows that this is a discounted cost discrete-time MDP with discount factor $\beta$, Markov states and Markov actions given respectively by
{\small$$
\begin{aligned}
X_l &= (x_{t_l},\tilde{t}_l) \in \mathcal{X}\times \{0,1,2,\cdots\},\\
A_l &= (a_{t_l}, \Delta t) \in \mathcal{A}\times \mathcal{T},
\end{aligned}
$$}
where $\tilde{t}_l = t_l -l$, the state cost equal to
{\small$$
C(X_l,A_l) = \beta^{\tilde{t}_l} \left[\bar{c}(x_{t_l},a_{t_l},\Delta t) + \beta^{\Delta t}O \right],
$$}
and the skip-transition probability defined in \cref{Eq:Skip-TransitionProb}. Hence, the cost in \cref{Eq:CostFunPenaltyProblem} becomes
{\small$$
f^\mu(x) = \mathbb{E}\left[ \sum_{l=0}^\infty \beta^l C(X_l.A_l)\middle \vert X_l = (x,0), \mu \right].
$$}
The consolidated formulation can be treated as a regular Markov decision problem. Note that the Cartesian product of countable countably many polish spaces is still Polish. Hence, $\mathcal{X}\times \{0,1,2,\cdots\}$ is Polish if $\mathcal{X}$ is polish. Thus, the results (mainly
the results available to Polish spaces) can be derived from current Markov decision literature \cite{puterman2014markov}. Applying Theorem 6.2.5 and Theorem 6.2.12 of \cite{puterman2014markov}, we obtain claims in \Cref{Theo:DPEquation}.
\end{proof}

\subsection{Proof of \Cref{Lemma:NecessaryHardConstraint}}\label{Proof:NecessaryHardConstraint}
\begin{proof}
For a given $L$, let $t_{L+1}$ be the time instance of the $(L+1)$-th update. The accumulated costs before $(t_{L+1})$ can be written as 
{\begin{equation}\label{Eq:CostBeforetL}
\begin{aligned}
\small
&\mathbb{E}\left[\sum_{t=0}^{t_{L+1}-1} \beta^t c(x_t,a_t)\middle \vert x_0 =x \right]\\ 
=& \mathbb{E}\left[ \sum_{l=0}^L \beta^{t_l} \bar{c}(x_{t_l},a_{t_l},t_{l+1}-t_{l})\middle \vert x_0=x \right]\\
=&\mathbb{E}\left[\sum_{l=0}^L \beta^{t_l} \mathbb{E}\left[ \sum_{t=t_l}^{t=t_{l+1}-1} \beta^{t-t_l}c(x_t,a_{t_l})\middle\vert x_{t_l} \right] x_0 =x \right],
\end{aligned}
\end{equation}}where we use the tower property of conditional expectation to derive the first equality and the second equality follows immediately after some algebraic rearrangements. Suppose at time instance $t_l$, $l\in\mathbb{N}$, the process is at state $x_{t_l}$. Let $t_{l+1} = t_l + \tau(x_{t_l})$ and $a_{t} = \pi(x_{t_l})$ for $t=t_l,t_l +1,\dots,t_{l+1}-1$. From \cref{Eq:NecessaryHardConstraint}, we have
{
\begin{equation}\label{Eq:OngoingInquality}
\small
\mathbb{E}\left[ \sum_{t=t_l}^{t_{l+1}-1} \beta^{t-t_l} c(x_t,a_t)\middle\vert x_{t_l}\right] \leq \alpha V(x_{t_l}) -\mathbb{E}\left[ \alpha\beta^{t_{l+1}-t_l} V(x_{t_{l+1}})\middle\vert x_{t_l}\right].
\end{equation}
}
Applying \cref{Eq:OngoingInquality} into \cref{Eq:CostBeforetL} for every $l\leq L$ yields
$$
\small
\begin{aligned}
&\mathbb{E}\left[\sum_{t=0}^{t_{L+1}-1} \beta^{t} c(x_t,a_t)\middle \vert x_0 =x \right]\\ 
\leq& \mathbb{E}\left[\sum_{l=0}^L \beta^{t_l} \alpha\left( V(x_{t_l}) - \beta^{t_{l+1}-t_l} V(x_{t_{l+1}}) \right)   \middle \vert x_0=x\right] \\
=& \alpha \mathbb{E}\left[ V(x_{t_0}) - \beta^{t_{L+1}} V(x_{t_{L+1}}) \middle \vert x_0 =x\right]
\leq \alpha V(x),
\end{aligned}
$$
where we use the fact that $c:\mathcal{X}\times \mathcal{A}\rightarrow \mathbb{R}^+$ produces a non-negative $V(\cdot)$, i.e., $V(x)\geq 0,\forall x\in\mathcal{X}$. Since $L$ can be chosen arbitrarily, taking $L$ to infinity, we have $t_l\rightarrow \infty$, and by definition of $v^\mu$, $v^\mu(x) \leq \alpha V(x)$ for every $x\in\mathcal{X}$.
\end{proof}

\subsection{Proof of \Cref{Theo:Well-Defineness}}\label{Proof:Well-Defineness}
\begin{proof}
To show that there is a feasible set for problem (\ref{Eq:OptimizationHardConstraint}), it is sufficient to show that for any $x\in\mathcal{X}$, when $\Delta t_x =1$, there always exists an action $a_x\in\mathcal{A}$ such that 
$
\mathbb{E}\left[ c(x,a_x) + \alpha \beta V(x_{1})\middle\vert x_0 =x, a_0 = a_x\right] \leq \alpha V(x).
$
Let $\phi^*$ be an optimal policy of the classic MDP. Then, by Bellman equation, we have
$$
\min_{a\in\mathcal{A}}\mathbb{E}\left[ c(x,a) + \beta V(x_1)\middle\vert x_0 =x, a_0 =a \right] = V(x),
$$
where the minimum is attained at $a^* = \phi^*(x)$. That means there exists $a_x = \phi^*(x)$ such that
$$
\alpha\mathbb{E}\left[ c(x,a_x) + \beta V(x_1)\middle\vert x_0 =x, a_0 =a_x \right] = \alpha V(x).
$$
Since $c(\cdot,\cdot)$ is non-negative, we have
$$
\mathbb{E}\left[ c(x,a_x) + \alpha\beta V(x_1)\middle\vert x_0 =x, a_0 =a_x \right] = \alpha V(x).
$$
This shows that for every $x\in\mathcal{X}$, there always exists a $\Delta t_x \in\mathcal{T}$ such that we can find an action $a_x\in\mathcal{X}$ so that (\ref{Eq:NecessaryHardConstraint}) is satisfied.
\end{proof}






\bibliography{references}

\bibliographystyle{IEEEtran}

\end{document}